\newtheorem{thm}{Theorem}[section]
\newtheorem{rem}[thm]{Remark}
\newtheorem{ass}{Assumption}
\DeclareMathOperator*{\argmin}{argmin}
\newcommand{\cobs}{\mathbf{C}^{\mathrm{obs}}}
\newcommand{\real}[1]{\mathrm{Re} \left( #1 \right) }
\newcommand{\vect}[1]{\mathrm{vec} \left( #1 \right) }
\newcommand{\Cov}[1]{\mathrm{Cov} \left( #1 \right) }
\newcommand{\Var}[1]{\mathrm{Var} \left( #1 \right) }
\newcommand{\norm}[1]{\left\|  #1 \right\|}
\newcommand{\abs}[1]{\left|  #1 \right|}
\newcommand{\vG}{\vect{\mathbf{G}}}
\newcommand{\vGn}{\vect{\mathbf{G}(\mathbf{y}_n)}}
\journal{Journal of Sound and Vibration}
\begin{document}

\begin{frontmatter}

\title{Weighted Data Spaces for Correlation-Based Array Imaging in Experimental Aeroacoustics}

\author[dlraddress]{Hans-Georg Raumer \corref{correspondingauthor}} 
\cortext[correspondingauthor]{Corresponding author}
\ead{hans-georg.raumer@dlr.de}
\author[dlraddress]{Carsten Spehr}
\author[namaddress,mpiaddress]{Thorsten Hohage} 
\author[dlraddress]{Daniel Ernst}
\address[dlraddress]{Institute of Aerodynamics and Flow Technology, German Aerospace Center (DLR), Bunsenstraße 10, 37073 Göttingen}
\address[namaddress]{Institute for Numerical and Applied Mathematics, University of Göttingen, Lotzestraße 16-18, 37073 Göttingen}
\address[mpiaddress]{Max Planck Institute for Solar System Research, Justus-von-Liebig-Weg 3, 37077 Göttingen}

\begin{abstract} 
This article discusses aeroacoustic imaging methods based on correlation measurements in the frequency domain. Standard methods in this field assume that the estimated correlation matrix is superimposed with additive white noise. 
In this paper we present a mathematical model for the measurement process covering arbitrarily correlated noise. 
The covariance matrix of correlation data is given in terms of fourth order moments. The aim of this paper is to explore the use of such additional information on the measurement data in imaging methods.
For this purpose a class of weighted data spaces is introduced, where each data space naturally defines an associated beamforming method with a corresponding point spread function. This generic class of beamformers contains many well-known methods such as Conventional Beamforming, (Robust) Adaptive Beamforming or beamforming with shading. This article examines in particular weightings that depend on the noise (co)variances. In a theoretical analysis we prove that the beamformer, weighted by the full noise covariance matrix, has minimal variance among all beamformers from the described class. Application of the (co)variance weighted methods on synthetic and experimental data show that the resolution of the results is improved and noise effects are reduced.  
\end{abstract}

\begin{keyword}
Aeroacoustics \sep beamforming \sep data weighting \sep noise covariance 
\end{keyword}
\end{frontmatter}


\section{Introduction}
A typical aeroacoustic experiment involves testing a solid object inside the flow field of a wind tunnel and measuring pressure fluctuations with a microphone array apart from that object. For experiments in aeronautical research the solid object may be a rotor blade of a helicopter or a scaled aircraft model, for instance. Such aeroacoustic measurements are conducted in wind tunnels with open or closed test sections. Each environment has its specific challenges and uncertainties regarding the measurement signals. For this paper we will consider a closed test section environment. In such wind tunnels, microphones are usually flush-mounted at the wind tunnel wall and thus located underneath the turbulent boundary layer. Therefore the microphone signals yield a superposition of aeroacoustic source signal, pressure fluctuations due to the turbulent boundary layer and measurement noise.
\ \\ \ \\
Array imaging methods are employed in various fields of physics and engineering such as radar \cite{Haykin1993}, geophysical imaging \cite{Bleistein2001}, speech enhancement \cite{Farmani2017, Ganguly2017, Long2019} or helioseismic holography \cite{Lindsey2000, Lindsey2000a, Gizon2018}. Application of microphone arrays for aeroacoustic purposes was started in 1976 by the work of Billingsley \& Kinns \cite{Billingsley1976} on sound sources of a turbulent jet. Nowadays, in most applications microphone arrays have replaced the elliptic mirrors that were used before (see e.g. \cite{Grosche1975}).
\ \\ \ \\In the aeroacoustic context, array imaging methods are often called beamforming methods and the array image is called source map. 
For any beamforming procedure the essential aspect of data post processing is the extraction of correlation information from the raw microphone signals. The estimated cross correlations then serve as input data for the imaging functional which yields an estimator on the source power of the unknown source at a focus point $\mathbf{y} \in \mathbb{R}^3$. The dynamic range and resolution of the source maps can be improved by optimized experimental setups (see e.g. \cite{Malgoezar2016}) as well as by modifications of the beamforming imaging functionals (see e.g. \cite{Cox1987, Sarradj2006, Dougherty2014}). In this paper we will concentrate on the latter aspect. More precisely, a mathematical framework for the weighting of correlation data is introduced. Within that framework, each weighting type defines an associated beamforming method. The focus is set on weightings depending on the (co)variances of the correlation estimates. This approach is motivated by the technique of \textit{generalized least squares} (GLS) which goes back to Aitken \cite{Aitken1936}. For linear regression models, the \textit{Gauss-Markov Theorem} states that GLS yields the best linear unbiased estimator (BLUE) of the regression coefficient vector. A similar optimality result will be derived in Section \ref{sec:varopt} of this article. More comprehensive and detailed information on GLS can be found e.g. in \cite{Kariya2004}.  \ \\ \ \\
In Section \ref{sec:mathmod} we derive a mathematical model for the measurement process which yields a correlation estimator represented by the sum of the noise free data and an arbitrarily correlated noise quantity. Further we introduce a generic data space that is characterized by a weighting matrix $\mathbf{W}$ that may depend on the covariances of the data and reduces to the standard data space if $\mathbf{W}$ is the identity matrix. Section \ref{sec:arrayapplication} illustrates how this concept applies to imaging methods, in particular beamforming and DAMAS-NNLS. We obtain a class of beamforming and corresponding deblurring methods that are parameterized by the weighting matrix $\mathbf{W}$. Several choices for $\mathbf{W}$ are discussed. On the one hand those which represent well-known beamforming methods (Conventional Beamforming, Robust Adaptive Beamforming, Capon's method, beamforming with shading) and on the other hand those which depend on the variances and covariances of the measurement noise.
A mathematical analysis of the parameterized beamforming methods in Section \ref{sec:varopt} shows that the variance is minimized if $\mathbf{W}$ is chosen as the covariance matrix $\mathbf{\Sigma}$ of the data. Subsequently we point out that this result should not be confused with the variance minimizing property of Capon's method. In particular the Capon beamformer and the beamformer parameterized by $\mathbf{\Sigma}$ are not equivalent. We also demonstrate under which assumptions on the data, Capon's beamformer is a good approximation of the $\mathbf{\Sigma}$-weighted beamformer. In Section \ref{sec:computational} we discuss certain computational aspects regarding the implementation of the methods and the estimation of data covariances. Section \ref{sec:results-synthetic} investigates how the weighting affects the resolution and signal-to-noise ratio (SNR) of the imaging result. The effects are examined for a simple synthetic dataset with a single monopole source. Three particular weighting choices for beamforming and regularized DAMAS-NNLS are tested. In Section \ref{sec:results-experimental} we apply the same set of weightings and imaging methods 
on an experimental dataset and compare their results. Some statistical measures such as the deviation from the white noise assumption are also presented. Section \ref{sec:conclusion} completes this article with some conclusions regarding the theoretical results and application of the presented methods.


\section{Measurement Process} \label{sec:mathmod}
The measurement data  of the aeroacoustic experiment is recorded by a microphone array with microphones located at positions $\lbrace \mathbf{x}_1, \dots , \mathbf{x}_M \rbrace$. By $p(\mathbf{x}_m, \omega)$ we denote the Fourier transformed pressure signal at angular frequency $\omega$. For aeroacoustic measurement data it is common practice to treat the observed signals as random variables. Therefore we consider a stochastic framework for the entire analysis. It is assumed that the full signal is represented as the sum of the aeroacoustic source signal $p^{\mathrm{ac}}$,  hydrodynamic pressure fluctuations $p^{\mathrm{hyd}}$ due to the turbulent boundary layer and additive zero mean noise $\epsilon$
\begin{linenomath}\begin{equation}\label{eq:psignal}
p(\mathbf{x}_m, \omega) = p^{\mathrm{ac}} (\mathbf{x}_m, \omega) + p^{\mathrm{hyd}}(\mathbf{x}_m, \omega) + \epsilon(\mathbf{x}_m, \omega) \ .
\end{equation}
\end{linenomath}
For the imaging process we will not use these random variables themselves but only cross correlations of two sensors. For the further modelling of the measurement operator we make the following assumptions.

\begin{ass}[Expectation of Cross Correlations] For the cross correlations in frequency domain holds for $m,l = 1, \dots, M$ \label{as:cor}

\begin{enumerate}
\item \textnormal{All signal parts have zero mean} \label{it:zermean_parts}
\begin{linenomath}
\begin{equation*}
   \mathbb{E}\left[ p^{\textnormal{ac}} (\mathbf{x}_m, \omega)\right] =  \mathbb{E}\left[ p^{\textnormal{hyd}} (\mathbf{x}_m, \omega)\right] = \mathbb{E}\left[ \epsilon (\mathbf{x}_m, \omega)\right] = 0 \ .
\end{equation*}
\item \textnormal{Different signal parts are uncorrelated}
\begin{eqnarray*}
\mathbb{E}\left[ p^{\textnormal{ac}} (\mathbf{x}_m, \omega) p^{\textnormal{hyd}} (\mathbf{x}_l, \omega)^* \right] &= 0 \ ,\\
\mathbb{E}\left[ p^{\textnormal{ac}} (\mathbf{x}_m, \omega) \epsilon ( \mathbf{x}_l,\omega)^* \right] &= 0 \ ,\\
 \mathbb{E}\left[ p^{\textnormal{hyd}} (\mathbf{x}_m, \omega)\epsilon ( \mathbf{x}_l,\omega)^*  \right] &= 0 \ .
\end{eqnarray*}
\item \label{it:blnoise}
\begin{equation*}
\mathbb{E}\left[ p^{\textnormal{hyd}} (\mathbf{x}_m, \omega) p^{\textnormal{hyd}} (\mathbf{x}_l, \omega)^* \right] = 0 \quad \textnormal{if} \  m \neq l \ ,
\end{equation*} 
\item 
\begin{equation*}
\mathbb{E}\left[ \epsilon_m(\omega) \epsilon_l(\omega)^* \right] = 0 \quad \textnormal{if} \ m \neq l \ .
\end{equation*}
\end{linenomath}
\end{enumerate}
\end{ass} 
\begin{rem}[Boundary layer noise]
In a real measurement setup Assumption \textnormal{\ref{as:cor}.\ref{it:blnoise}} is too restrictive, i.e.\ the correlation matrix of hydrodynamic pressure fluctuations will not be diagonal. Nevertheless, the magnitude of diagonal entries (auto powers) will be much greater than the magnitude of off-diagonal entries (cross powers). Investigations on experimental data (turbulent boundary layers on airfoil fuselages respectively on flat plates) show that the magnitude of off-diagonal entries can be well approximated by parametric models decaying exponentially with the distance between the sensors (see e.g. \cite{Efimtsov1982, Palumbo2012, Smolyakov2006}). 
\end{rem}\ \\For the remaining part we omit the frequency dependency to improve readability. During the measuring process $J$ block samples of cross correlations are generated 
\begin{linenomath}
\begin{equation*}
p_j(\mathbf{x}_m) p_j(\mathbf{x}_l)^* \ \text{for} \ j = 1, \dots , J, \ \ m,l = 1, \dots , M  \ .
\end{equation*}
\end{linenomath}
Taking the mean respectively block-sample average yields the observed cross spectral matrix (CSM) $\cobs$
\begin{linenomath}
\begin{equation} \label{eq:csm_def}
\cobs_{ml}(\mathbf{x}_m,\mathbf{x}_l)= \mathbb{M} \left\lbrace p_j(\mathbf{x}_m) p_j(\mathbf{x}_l)^* \right\rbrace = \frac{1}{J}\sum \limits_{j=1}^J  p_j(\mathbf{x}_m, ) p_j(\mathbf{x}_l)^* \ .
\end{equation}
\end{linenomath}
In practice this is usually carried out by Welch's method \cite{Welch1967}. \\ \ \\
The true signal CSM $\mathbf{C}^{\mathrm{ac}}$ is defined as the expectation of the acoustic signal correlations 
\begin{linenomath}
\begin{equation*}
\mathbf{C}^{\mathrm{ac}}_{ml}(\mathbf{x}_m,\mathbf{x}_l) = \mathbb{E}\left[ p^{\textnormal{ac}} (\mathbf{x}_m) p^{\textnormal{ac}} (\mathbf{x}_l)^* \right] \ .
\end{equation*}
\end{linenomath}
Analogously, the diagonal matrices of boundary layer noise and measurement noise correlations are defined by
\begin{linenomath}
\begin{eqnarray*}
    \mathbf{D}^{\mathrm{hyd}}_{ml} &=&  \mathbb{E} \left[ p^{\mathrm{hyd}}(\mathbf{x}_m) p^{\mathrm{hyd}}(\mathbf{x}_l)^*     \right] \\
    \mathbf{D}^{\epsilon}_{ml} &=& \mathbb{E} \left[ \epsilon(\mathbf{x}_m) \epsilon(\mathbf{x}_l)^*     \right] \ .
\end{eqnarray*}
\end{linenomath}
Employing the vector notation $\mathbf{p}_j = \left(p_j(\mathbf{x}_1), \dots, p_j(\mathbf{x}_M) \right)^{\top}$ as well as Eq. \eqref{eq:psignal} and Assumption \ref{as:cor}, we can add and subtract the expected value of each sub-quantity and arrive at
\begin{linenomath}\label{eq:noisemod_der}
\begin{equation} 
\begin{aligned}
\cobs &= \mathbb{M} \left\lbrace \mathbf{p}_j\mathbf{p}_j^* \right\rbrace  \\ &=
\mathbf{C}^{\mathrm{ac}} + \mathbb{M} \left\lbrace \mathbf{p}^{\mathrm{ac}}_j \mathbf{p}^{\text{ac}*}_j \right\rbrace - \mathbf{C}^{\mathrm{ac}} + \mathbf{D}^{\mathrm{hyd}} +  \mathbb{M} \left\lbrace \mathbf{p}^{\mathrm{hyd}}_j  \mathbf{p}^{\mathrm{hyd}*}_j \right\rbrace - \mathbf{D}^{\mathrm{hyd}}  \\
& \  \ \ \ +  \mathbf{D}^{\epsilon} +  \mathbb{M} \left\lbrace \boldsymbol{\epsilon}_j  \boldsymbol{\epsilon}_j^* \right\rbrace - \mathbf{D}^{\epsilon}                                          + \mathbb{M} \left\lbrace \mathbf{p}^{\mathrm{ac}}_j  \mathbf{p}^{\mathrm{hyd}*}_j \right\rbrace  + \mathbb{M} \left\lbrace \mathbf{p}^{\mathrm{hyd}}_j  \mathbf{p}^{\text{ac}*}_j \right\rbrace  \\
& \  \ \ \ +  \mathbb{M} \left\lbrace \mathbf{p}^{\mathrm{ac}}_j  \boldsymbol{\epsilon}_j^* \right\rbrace + \mathbb{M} \left\lbrace \boldsymbol{\epsilon}_j \mathbf{p}^{\mathrm{ac}*}_j \right\rbrace + \mathbb{M} \left\lbrace \mathbf{p}^{\mathrm{hyd}}_j  \boldsymbol{\epsilon}_j^* \right\rbrace + \mathbb{M} \left\lbrace \boldsymbol{\epsilon}_j \mathbf{p}^{\mathrm{hyd}*}_j \right\rbrace  \\ 
&= \mathbf{C}^{\mathrm{ac}} + \mathbf{D} + \mathbf{Z}  \ .
\end{aligned}
\end{equation}
\end{linenomath}
Note that expressions of the type $\mathbb{M}\lbrace \dots \rbrace$ are still random since they are finite sums of random variables, whereas $\mathbb{E}\lbrace \dots \rbrace$ is always deterministic.
Following \eqref{eq:noisemod_der}, the observed CSM can be decomposed as the sum of the true acoustic signal CSM $\mathbf{C}^{\mathrm{ac}}$, a diagonal matrix 
\begin{linenomath}
\begin{equation*}
\mathbf{D}=\mathbf{D}^{\mathrm{hyd}} + \mathbf{D}^{\epsilon} 
\end{equation*}
\end{linenomath}and a noise matrix $\mathbf{Z}$ containing all the remaining summands in \eqref{eq:noisemod_der}. Assumption \ref{as:cor} furthermore implies that 
\begin{linenomath}
\begin{equation*}
\mathbb{E} \left \lbrace \mathbf{Z} \right \rbrace = 0 .
\end{equation*}
\end{linenomath}
 Since we do not want to deal with matrix-valued random variables, we consider their vectorized counterparts $\vect{\cdot}$ instead. For any matrix $\mathbf{A} \in \mathbb{C}^{d \times d}$ we define the column-wise vectorization
 \begin{linenomath}
\begin{equation*}
 \vect{\mathbf{A}} = \left(\mathbf{A}_{11}, \dots , \mathbf{A}_{d 1}, \mathbf{A}_{12} , \dots , \mathbf{A}_{d2}, \dots , \mathbf{A}_{1d} , \dots , \mathbf{A}_{dd} \right)^{\top} \ .
\end{equation*}    
 \end{linenomath}
Hence, $\vect{\mathbf{Z}} \in \mathbb{C}^{M^2}$ is a complex valued random variable with positive semi-definite covariance matrix $\mathbf{\Sigma} \in \mathbb{C}^{M^2 \times M^2}$ 
\begin{linenomath}
\begin{equation*} 
\Cov{\vect{\mathbf{Z} }} = \mathbf{\Sigma}.
\end{equation*} 
\end{linenomath}
Therefore 
\begin{linenomath}
\begin{equation} \label{eq:datadistribution}
\Cov{\vect{(\cobs}} = \Cov{ \vect{\mathbf{C}^{\mathrm{ac}}} + \vect{\mathbf{D}} + \vect{\mathbf{Z}} }  = \mathbf{\Sigma} \ ,
\end{equation} 
\end{linenomath}
since $\mathbf{C}^{\mathrm{ac}}$ and $\mathbf{D}$ are deterministic quantities. Eq. \eqref{eq:datadistribution} reveals that we can estimate $\mathbf{\Sigma}$ by means of the covariances of the components of the observed CSM. In standard aeroacoustic modelling, the CSM is considered as an element of the Hilbert space $\left(\mathbb{C}^{M \times M} , \langle \cdot, \cdot \rangle_F \right) \simeq \left(\mathbb{C}^{M^2} , \langle \cdot, \cdot \rangle_2 \right)$ with inner product
\begin{linenomath}
\begin{equation*}
\left \langle \mathbf{A}, \mathbf{B}  \right \rangle_F = \sum \limits_{m,l=1}^M \mathbf{A}_{ml} \left(\mathbf{B}_{ml} \right)^*  =
 \sum \limits_{j=1}^{M^2} \vect{\mathbf{A}}_j  \left( \vect{\mathbf{B}}_{j} \right)^* = \left \langle \vect{\mathbf{A}}, \vect{\mathbf{B}}  \right \rangle_2 \ .
\end{equation*} 
\end{linenomath}
For any Hermitian, positive definite matrix $\mathbf{W}$ we can instead consider the Hilbert space $\left(\mathbb{C}^{M^2} , \langle \cdot, \cdot \rangle_{\mathbf{W}} \right)$ with inner product
\begin{linenomath}
\begin{equation}\label{eq:innersig}
\begin{aligned}
\left \langle \vect{\mathbf{A}}, \vect{\mathbf{B}}  \right \rangle_{\mathbf{W}} &= \left \langle \mathbf{W}^{\nicefrac{-1}{2}} \vect{\mathbf{A}}, \mathbf{W}^{\nicefrac{-1}{2}} \vect{\mathbf{B}}  \right \rangle_{2}  \\ &= \left \langle  \vect{\mathbf{A}}, \mathbf{W}^{-1} \vect{\mathbf{B}}  \right \rangle_{2} , 
\end{aligned}
\end{equation}
\end{linenomath}
where $\mathbf{W}^{-1}$ denotes the inverse matrix. \footnote{
Vice versa, any Hilbert space $\mathcal{H}$ on $\mathbb{C}^{M^2}$ can be characterized as $\mathcal{H} = \left(\mathbb{C}^{M^2}, \langle \cdot , \cdot \rangle_{\mathbf{W}}  \right)$ with an appropriate Hermitian and positive definite matrix $\mathbf{W}$ called Gramian matrix (see e.g. \cite{Horn2012}).} Let $\mathbf{U} \mathbf{\Lambda} \mathbf{U}^*$ be the eigenvalue decomposition (EVD) of $\mathbf{W} $ with a diagonal matrix $\mathbf{\Lambda}$  and a unitary matrix $\mathbf{U}$ then $\mathbf{W}^{-\nicefrac{1}{2}}$ is defined by
\begin{linenomath}
\begin{equation*}
\mathbf{W}^{-\nicefrac{1}{2}} = \mathbf{U} \mathbf{\Lambda}^{-\nicefrac{1}{2}} \mathbf{U}^* \ .
\end{equation*} 
\end{linenomath}
We recall that standard beamforming methods can be characterized by a minimization problem (see e.g. \cite{Sijtsma2004}). The purpose of  introducing the above concept is to formulate such a characterizing minimization problem for any Hilbert space $\left(\mathbb{C}^{M^2} , \langle \cdot, \cdot \rangle_{\mathbf{W}} \right)$. This will be done in the next section.


\section{Application to Array Imaging Methods} \label{sec:arrayapplication}
So far we presented a general mathematical modelling of the measurement process and motivated the choice of a generic data space $\left(\mathbb{C}^{M^2} , \langle \cdot, \cdot \rangle_{\mathbf{W}} \right)$. As a next step we wish to incorporate this concept into aeroacoustic source localization methods. This will be done for beamforming and DAMAS-NNLS.

\subsection{Sound propagation}
For measurements in closed test sections, usually a simplified free field sound propagation model is applied (see e.g. \cite{Sijtsma2012}). Let $c$ denote the speed of sound and consider a homogeneous flow field $\mathbf{u} = (u_1, u_2, u_3)^{\top}$ with $\abs{\mathbf{u}}<c$ (subsonic flow). With the time factor convention $e^{+ i \omega t}$, time harmonic sound propagation in this flow field is modelled by the convected Helmholtz equation (see e.g. \cite{Ostashev2015}) 
\begin{linenomath}
\begin{equation}
(k - \mathrm{i} \mathbf{m} \cdot \nabla)^2p + \Delta p =  -Q \ .
\label{eq:convhelmeq}
\end{equation}
\end{linenomath}
For a source term $Q$, wavenumber $k = \frac{\omega}{c}$ and Mach vector $\mathbf{m} = \frac{1}{c} \mathbf{u}$. The Green's function $g(\mathbf{x}, \mathbf{y}, \omega)$  of Eq. \eqref{eq:convhelmeq} is (see \cite[Appendix A]{Mosher1984})
\begin{linenomath}
\begin{equation*} 
g(\mathbf{x},\mathbf{y}, \omega) =  \frac{\textnormal{exp} \left(\frac{-\mathrm{i} k}{\beta^2} \left( -(\mathbf{x}-\mathbf{y})\cdot \mathbf{m} + |\mathbf{x}-\mathbf{y}|_{\mathbf{m}} \right) \right) }{4 \pi |\mathbf{x}-\mathbf{y}|_{\mathbf{m}} } \ ,
\end{equation*}
\end{linenomath}
with $\beta^2 = 1-\abs{\mathbf{m}}^2$ and the distance measure 
\begin{linenomath}
\begin{equation*}
|\mathbf{x}-\mathbf{y}|_{\mathbf{m}} = \sqrt{\left((\mathbf{x}-\mathbf{y}) \cdot \mathbf{m} \right)^2 + \beta^2 |\mathbf{x}-\mathbf{y}|^2} \ .
\end{equation*}
\end{linenomath}

\subsection{Beamforming} 
We start with a brief introduction to beamforming methods. For a source region of interest $\mathcal{Y}$, the beamforming imaging functional is a map
\begin{linenomath}
\begin{equation*}
\mathcal{I}(\cdot, \omega): \ \mathcal{Y} \rightarrow \mathbb{C} , \ \ \ \mathbf{y} \mapsto \mathcal{I}(\mathbf{y}) \ .
\end{equation*}
\end{linenomath}
In the following we will omit the dependency on $\omega$ to increase the readability.  For a potential source location $\mathbf{y}$ we define the \textit{propagation vector}
\begin{linenomath}
\begin{equation*}
\mathbf{g}(\mathbf{y}) =
\begin{pmatrix}
g(\mathbf{x}_1, \mathbf{y}) \\
\vdots \\
g(\mathbf{x}_M, \mathbf{y}) 
\end{pmatrix}
\end{equation*}
\end{linenomath}
and the corresponding \textit{propagation matrix}
\begin{linenomath}
\begin{equation*}
\mathbf{G}(\mathbf{y})  = \mathbf{g}(\mathbf{y})  \mathbf{g}(\mathbf{y})^* \ .
\end{equation*}\end{linenomath}
The principle of frequency domain beamforming is to minimize the norm distance between a scalar multiple of the propagation matrix at the source point $\mathbf{y}$ and the measured CSM  i.e.
\begin{linenomath}\begin{equation}\label{eq:minprob}
\mathcal{I}(\mathbf{y}) = \argmin \limits_{\mu \in \mathbb{C}} \norm{\cobs- \mu \mathbf{G}}^2 \ .
\end{equation}\end{linenomath}
Standard beamforming methods consider the Frobenius norm as distance measure i.e. the Euclidean norm of the vectorized quantity. 
Solving Eq. \eqref{eq:minprob} with respect to the Frobenius norm is often referred to as \textit{Conventional Beamforming} \cite{Sijtsma2004}. 
\\ \ \\Alternatively one may choose the norm $\norm{\cdot}_{\mathbf{W} }$ induced by the scalar product $\langle \cdot , \cdot \rangle_{\mathbf{W}}$ \eqref{eq:innersig}
\begin{linenomath}\begin{equation*}
\norm{\mathbf{X}}_{\mathbf{W}} = \sqrt{\left \langle  \mathbf{X},  \mathbf{X} \right \rangle_{\mathbf{W}} } \ . 
\end{equation*}\end{linenomath}
The minimization problem 
\begin{linenomath}\begin{equation} \label{eq:minprob_wbf}
\mathcal{I}_{\mathbf{W}}(\mathbf{y}) = \argmin \limits_{\mu \in \mathbb{C}} \norm{ \vect{\cobs}- \mu \vG }_{\mathbf{W}}^2 
\end{equation}\end{linenomath}
has the solution
\begin{linenomath}\begin{equation} \label{eq:wbf_def}
\mathcal{I}_{\mathbf{W}}(\mathbf{y}) =   \frac{ \left \langle \vect{\cobs}, \vG \right \rangle_{\mathbf{W}} } {\left \langle \vG , \vG  \right \rangle_{\mathbf{W}}} .
\end{equation}\end{linenomath}
\begin{rem}[Real valued and non-negative source power] \ \label{rem:realnonneg_sp} \\
The minimization problem \eqref{eq:wbf_def} is formulated over $\mathbb{C}$ since this is the natural choice for complex valued vector spaces. Note that the minimizer over $\mathbb{R}$ is simply given by the real part of the complex solution. For any complex valued beamforming solution provided by \eqref{eq:wbf_def} one may then consider 
\begin{linenomath}\begin{equation*}
\mathrm{max} \left(0, \real{\mathcal{I}_{\mathbf{W}} } \right)
\end{equation*}\end{linenomath}
as an estimator of the source power. 
\end{rem}


\subsubsection{Choice of the weighting matrix}
The presented generic beamformer covers arbitrary Hermitian and positive definite weighting matrices $\mathbf{W}$. Now we will examine several particular choices of $\mathbf{W}$ following two intentions:
\begin{enumerate}
\item Introduction of weighting choices related to the data covariance matrix $\mathbf{\Sigma}$.
\item Incorporation of common imaging methods (Conventional Beamforming, Robust Adaptive Beamforming, Capon's method, beamforming with shading) into the framework of Section \ref{sec:mathmod}.
\end{enumerate}

\paragraph{Conventional Beamforming:} \ \\
\begin{linenomath}\begin{equation}
\mathbf{W} = \sigma^2 \mathbf{I} \ \  \mathrm{for} \ \sigma^2 > 0 \ . \label{eq:wc_conventional}
\end{equation}\end{linenomath}
If we choose the weighting matrix $\mathbf{W}$ as a positive multiple of the identity matrix, the induced norm coincides with the Frobenius norm up to the factor $\sigma^2$. Hence the minimization problem \eqref{eq:minprob} is equivalent to the minimization with respect to the Frobenius norm
\begin{linenomath}\begin{equation*}
\mathcal{I}_{\mathbf{I}}(\mathbf{y}, \omega) = \argmin \limits_{\mu \in \mathbb{C}} \norm{\cobs- \mu \mathbf{G}}_F^2 \ ,
\end{equation*}\end{linenomath}
which yields the Conventional Beamforming solution.

\paragraph{Diagonal inverse covariance weighting \textnormal{(iv-d)}:} \ \\
\begin{linenomath}\begin{equation}
\mathbf{W}_{ij} = \begin{cases} \mathbf{\Sigma}_{ij} & \text{if} \  i=j \\ 0 & \text{else} \end{cases} \ .
 \label{eq:wc_iv-d}
\end{equation}\end{linenomath}
This weighting matrix contains the variances of each CSM entry 
\begin{linenomath}\begin{equation*}
\sigma_{ml}^2 =  \Var{ \cobs_{ml} } \ .
\end{equation*}\end{linenomath}
Since all off-diagonal entries of $\mathbf{W}$ vanish by construction, the corresponding norm can be rewritten in matrix notation
\begin{linenomath}\begin{equation*}
\norm{\mathbf{A}}_{\mathbf{W}}^2 = \sum \limits_{m=1}^M \sum \limits_{l=1}^M \frac{1}{\sigma_{ml}^2} \abs{\mathbf{A}_{ml}}^2 \ ,
\end{equation*}\end{linenomath}
which is a weighted Frobenius norm with the inverse variances as weights. This can be regarded as a reliability criterion on the data i.e. datapoints with high variances are considered less reliable and get a low weight in the optimization process whereas datapoints with low variances get a higher weight.

\paragraph{Full inverse covariance weighting \textnormal{(iv-f)}:} \ \\
\begin{linenomath}\begin{equation}
\mathbf{W} = \mathbf{\Sigma}  \label{eq:wc_iv-f} \ .
\end{equation}\end{linenomath}
The norm distance induced by $\norm{\cdot}_{\mathbf{\Sigma}}$ is also known as Mahalanobis distance \cite{Mahalanobis1936}. Furthermore this weighting choice performs a whitening transformation which is specified in the following remark. 
\begin{rem}[Whitening] \label{rem:whitening}\ \\
The choice \eqref{eq:wc_iv-f} of the weighting matrix $\mathbf{W}$ can be regarded as a whitening transformation that results in uncorrelated data with the identity matrix as covariance matrix i.e. Mahalanobis whitening. Similarly the choices \eqref{eq:wc_iv-d} and \eqref{eq:wc_conventional} are also whitening transformations under the additional assumption that the noise is uncorrelated (case \eqref{eq:wc_iv-d}) or even  white (case  \eqref{eq:wc_conventional}). 
\end{rem}

\paragraph{Beamforming with shading:} \ \\
To improve the imaging result one may endow each microphone with a weighting factor $\nu_m>0$ (see e.g. \cite{Sijtsma2010, Brooks1987}). This procedure is often called \textit{shading} in the literature  and yields the imaging functional
\begin{linenomath}\begin{equation*}
\mathcal{I}^{\mathrm{shad}} = \argmin_{\mu \in \mathbb{C}} \sum \limits_{m=1}^M \sum \limits_{l=1}^M \abs{\cobs_{ml} - \mu \nu_m \nu_l \mathbf{g}_m \mathbf{g}_l^* }^2 \ .
\end{equation*}\end{linenomath}
Hence, in the framework of this article, the beamforming functional with shading is represented by the weighting matrix 
\begin{linenomath}\begin{equation*}
\mathbf{W} = \mathrm{diag}\left( \vect{\boldsymbol{\nu} \boldsymbol{\nu}^{\top}} \right)^{-1} \ . 
\end{equation*}\end{linenomath}
Where $\boldsymbol{\nu} = \left(\nu_1, \dots , \nu_M   \right)^{\top}$ denotes the vector of microphone weights.
\paragraph{Robust Adaptive Beamforming (RAB):} \ \\
This beamforming method was introduced by Cox et al. \cite{Cox1987} and depends on a modelling parameter $\alpha>0$. It is defined as  
\begin{linenomath}\begin{equation*}
\begin{aligned}
\mathcal{I}^{\mathrm{RAB}} &= \frac{\mathbf{g}^* \left( \cobs + \alpha \mathbf{I} \right)^{-1} \cobs \left( \cobs + \alpha \mathbf{I} \right)^{-1} \mathbf{g}}{\left(  \mathbf{g}^* \left( \cobs + \alpha \mathbf{I} \right)^{-1} \mathbf{g} \right)^2} \\
&= \frac{\left \langle \vect{\mathbf{R}^{-1} \cobs \mathbf{R}^{-1}} , \vG \right \rangle_2}{ \left \langle \vect{\mathbf{R}^{-1} \mathbf{G} \mathbf{R}^{-1}} , \vG \right \rangle_2 }
\ ,
\end{aligned}
\end{equation*}\end{linenomath}
where $\mathbf{R} = \cobs + \alpha \mathbf{I}$. Again this beamforming method can be represented within the framework of this article by the weighting matrix
\begin{linenomath}\begin{equation}
\mathbf{W} = \mathbf{R}^{\top} \otimes \mathbf{R} \ ,   \label{eq:wc_rab}
\end{equation}\end{linenomath}
where $\otimes$ denotes the Kronecker product \cite[Def. 4.2.1 p.243]{Horn1991}. This statement can be verified using elementary properties of the Kronecker product. Firstly we note that $\mathbf{W}$ in \eqref{eq:wc_rab} is regular/Hermitian/positive definite if and only if $\mathbf{R}$ is regular/Hermitian/positive definite (see \cite[p. 243 ff.]{Horn1991}). From \cite[Lemma 4.3.1 p.255]{Horn1991} we obtain further that for any $\mathbf{A} \in \mathbb{C}^{M \times M}$ holds 
\begin{linenomath}\begin{equation}
\begin{aligned}
\left(\mathbf{W}\right)^{-1} \vect{\mathbf{A}} &= \left( \mathbf{R}^{-\top} \otimes \mathbf{R}^{-1} \right) \vect{\mathbf{A}} \\
&= \vect{\mathbf{R}^{-1} \mathbf{A} \mathbf{R}^{-1} } \ .
\end{aligned} \label{eq:matrixeq_kron}
\end{equation}\end{linenomath}
Using Eq. \eqref{eq:matrixeq_kron} multiple times for $\mathbf{A}=\cobs$ resp. $\mathbf{A}=\mathbf{G}$ yields
\begin{linenomath}\begin{equation*}
\begin{aligned}
\mathcal{I}^{\mathrm{RAB}} &= \frac{\left \langle \vect{\mathbf{R}^{-1} \cobs \mathbf{R}^{-1}} , \vG \right \rangle_2}{ \left \langle \vect{\mathbf{R}^{-1} \mathbf{G} \mathbf{R}^{-1}} , \vG \right \rangle_2 } \\
&=\frac{\left \langle \mathbf{W}^{-1} \vect{ \cobs } , \vG \right \rangle_2}{ \left \langle \mathbf{W}^{-1} \vG , \vG \right \rangle_2 } \ .
\end{aligned}
\end{equation*}\end{linenomath}
\paragraph{Capon's method:} \ \\
This method was introduced by Capon \cite{Capon1969} and can be regarded as the limit case $\alpha=0$ of RAB. Assuming that $\cobs$ is regular, it is defined as
\begin{linenomath}\begin{equation}
\mathcal{I}^{\mathrm{Cap}} =    \left( \mathbf{g}^*\left(\cobs\right)^{-1} \mathbf{g} \right)^{-1} \ . \label{eq:caponbf_def}
\end{equation}\end{linenomath}
Analogously to RAB, Capon's method is represented by the weighting matrix
\begin{linenomath}\begin{equation}
\mathbf{W} = \left(\cobs\right)^{\top} \otimes \cobs \ .         \label{eq:wc_capon}
\end{equation}\end{linenomath}

\subsection{DAMAS}
The DAMAS problem was introduced by Brooks \& Humphreys in 2006 and attempts to deblur beamforming source maps, by solving an inverse problem. DAMAS relates the unknown source data to the beamforming result by means of an integral kernel, the \textit{point spread function} (PSF). For a map region $\mathcal{Y} \subset \mathbb{R}^{3}$ and a fixed frequency $\omega$ this integral kernel is defined by
\begin{linenomath}\begin{equation*}
\psi: \ \mathcal{Y} \times \mathcal{Y} \rightarrow \mathbb{C}, \ \ \ \left(\mathbf{y}, \mathbf{y}' \right) \mapsto \argmin \limits_{\mu \in \mathbb{C}} \norm{ \vect{\mathbf{G}(\mathbf{y}')}- \mu \vect{\mathbf{G}(\mathbf{y})} }_{2}^2 \ .
\end{equation*}\end{linenomath}
Note that the point spread function yields the value of the beamforming functional at focus point $\mathbf{y}$ for a monopole source at $\mathbf{y}'$. For an acoustic source map $\mathcal{I}$ obtained by Conventional Beamforming, the standard DAMAS problem seeks a pointwise positive solution $q$ of the Fredholm integral equation of the first kind
\begin{linenomath}\begin{equation} \label{eq:damas_inteq}
\mathcal{I}(\mathbf{y}) = \int_{\mathcal{Y}} q(\mathbf{y}') \psi(\mathbf{y}, \mathbf{y}') d\mathbf{y}' \ .
\end{equation}\end{linenomath}
If the PSF is shift invariant, i.e. $\psi(\mathbf{y}, \mathbf{y}') = \widetilde{\psi}(\mathbf{y}-\mathbf{y}')$ the integral reduces to a convolution integral. For aeroacoustic measurement setups the PSF is usually not shift invariant, nevertheless methods to solve integral equations as \eqref{eq:damas_inteq} are often referred to as \textit{deconvolution methods} in the aeroacoustic community. For a finite set of focus points $ \lbrace \mathbf{y}_n  \rbrace_{n=1}^N \subset \mathcal{Y}$ the integral Eq. \eqref{eq:damas_inteq} may be discretized as 
\begin{linenomath}\begin{equation}
\mathbf{H} \mathbf{q}  = \mathbf{b} \ , \label{eq:damas_inteq_discrete} 
\end{equation}\end{linenomath}
with 
\begin{linenomath}\begin{equation*}
\mathbf{H}_{nl} = \psi(\mathbf{y}_l, \mathbf{y}_n) \quad  \mathrm{and} \quad \mathbf{b}_n = \mathcal{I}(\mathbf{y}_n) \quad \mathrm{for} \ n,l=1, \dots , N  .
\end{equation*}\end{linenomath}
In the original paper by Brooks \& Humphreys \cite{Brooks2006}, the discrete problem \eqref{eq:damas_inteq_discrete} is solved by a Gauss-Seidel Algorithm applied to the unconstrained problem and the non-negativity constraint is enforced after each iteration. However, the Gauss-Seidel approach may lead to unsatisfactory results for experimental data sets. Experimental investigations in \cite{Bahr2017} have shown that solving the non-negative least squares problem (DAMAS-NNLS) 
\begin{linenomath}\begin{equation}
\min \limits_{\mathbf{q} \geq 0} \norm{\mathbf{Hq}-\mathbf{b}}_2^2 \label{eq:damas_nnls} 
\end{equation}\end{linenomath} 
yields cleaner source maps than the original DAMAS approach. Problem \eqref{eq:damas_nnls} may be solved by an appropriate solver such as L-BFGS-B \cite{Zhu1997} or an active set method  \cite[p. 161]{Lawson1974}. DAMAS-NNLS does not guarantee a unique solution and therefore the results of different solver routines may differ. Note that the minimization problem \eqref{eq:damas_nnls} is formulated on the source space and not on the data space as the minimization problem \eqref{eq:minprob_wbf} that characterizes beamforming results. 
\ \\ \ \\
For a source map obtained by a minimization problem with respect to a weighted norm $\norm{\cdot }_{\mathbf{W}}$ we just have to choose the same norm in the definition of the PSF i.e.
\begin{linenomath}\begin{equation*}
\psi_{\mathbf{W}}: \ \mathcal{Y} \times \mathcal{Y} \rightarrow \mathbb{C}, \ \ \ \left(\mathbf{y}, \mathbf{y}' \right) \mapsto \argmin \limits_{\mu \in \mathbb{C}} \norm{ \vect{\mathbf{G}(\mathbf{y}')} - \mu \vect{\mathbf{G}(\mathbf{y})} }_{\mathbf{W}}^2 \ .
\end{equation*}\end{linenomath} 
Finally we solve the NNLS-problem 
\begin{linenomath}\begin{equation}
\min \limits_{\mathbf{q} \geq 0} \norm{\mathbf{H}_{\mathbf{W}}\mathbf{q}-\mathbf{b}_{\mathbf{W}}}_2^2 \label{eq:damas_nnlsw} 
\end{equation}\end{linenomath}
with 
\begin{linenomath}\begin{equation*}
\left[{\mathbf{{H}}_{\mathbf{W}}}\right]_{nl} = \psi_{\mathbf{W}}(\mathbf{y}_l, \mathbf{y}_n) \quad  \mathrm{and} \quad \left[{\mathbf{b}_{\mathbf{W}}}\right]_n = \mathcal{I}_{\mathbf{W}}(\mathbf{y}_n) \quad \mathrm{for} \ n=1, \dots , N  .
\end{equation*}\end{linenomath}

\paragraph{Regularization} \ \\
Adding a quadratic norm  penalty term to the minimization functional in \eqref{eq:damas_nnlsw} yields a regularized version of DAMAS-NNLS
\begin{linenomath}\begin{equation}
\min \limits_{\mathbf{q} \geq 0} \norm{\mathbf{H}_{\mathbf{W}}\mathbf{q}-\mathbf{b}_{\mathbf{W}}}_2^2 + \alpha \norm{\mathbf{q}}_2^2 \ , \label{eq:damas_nnlsw_reg} 
\end{equation}\end{linenomath}
where $\alpha>0$ is the regularization parameter. The regularized approach \eqref{eq:damas_nnlsw_reg} ensures unique solutions and stable source power reconstructions.

\section{Variance optimal beamforming weights} \label{sec:varopt}
In this section we will study the variance of the beamforming functionals introduced in Section \ref{sec:arrayapplication}. As a main result we will show that the beamforming functional based on the Mahalanobis distance (iv-f) minimizes the variance among all beamformers. Furthermore the differences between iv-f beamforming and \textit{Capon's Method} \cite{Capon1969} (a.k.a. \textit{Minimum Variance Method}) are discussed.  \ \\ \\
As presented in Section \ref{sec:mathmod} and \ref{sec:arrayapplication}, on any Hilbert space $\mathcal{H} = \left( \mathbb{C}^{M^2}, \langle \cdot , \cdot \rangle_{\mathbf{W}} \right) $ beamforming estimators are characterized by
\begin{linenomath}\begin{equation*}
\begin{aligned}
\mathcal{I}_{\mathbf{W}}(\mathbf{y}) 
=  \frac{ \left \langle \vect{\cobs}, \vect{\mathbf{G}(\mathbf{y}) } \right \rangle_{\mathbf{W}} } {\left \langle \vect{\mathbf{G}(\mathbf{y})} , \vect{\mathbf{G}(\mathbf{y})}  \right \rangle_{\mathbf{W}}}   \ .
\end{aligned}
\end{equation*}\end{linenomath}
Since the measured CSM is modelled as a random quantity, the estimator $\mathcal{I}_{\mathbf{W}}(\mathbf{y})$ is also random. Note that 
\begin{linenomath}\begin{equation*}
\mathbb{E}\left( \mathcal{I}_{\mathbf{W}}(\mathbf{y}) \right) = \frac{ \left \langle \vect{\mathbf{C}^{\mathrm{ac}}+ \mathbf{D}}, \vect{\mathbf{G}(\mathbf{y}) } \right \rangle_{\mathbf{W}} } {\left \langle \vect{\mathbf{G}(\mathbf{y})} , \vect{\mathbf{G}(\mathbf{y})}  \right \rangle_{\mathbf{W}}} 
\end{equation*}\end{linenomath}
is the beamforming result for ideal noise free data i.e. $\mathbf{Z}=0$. Hence it is desirable that the result for noisy data $\cobs$ does not deviate too much from $\mathbb{E}\left( \mathcal{I}_{\mathbf{W}}(\mathbf{y}) \right)$. More precisely the variance of the estimator
\begin{linenomath}\begin{equation*}
V_{\mathbf{W}}(\mathbf{y}) :=\Var{\mathcal{I}_{\mathbf{W}}(\mathbf{y})} = \mathbb{E} \left[ \abs{ \mathcal{I}_{\mathbf{W}}(\mathbf{y}) - \mathbb{E}\left(  \mathcal{I}_{\mathbf{W}}(\mathbf{y}) \right)}^2 \right] 
\end{equation*}\end{linenomath}
should be small. The next theorem states a variance optimality property of the beamforming functional based on the Mahalanobis distance. 
\begin{thm}[Variance optimal beamforming functional]\label{thm:varopt} \ \\
Assume that the covariance matrix of the correlation data $\mathbf{\Sigma} = \Cov{\vect{\cobs}}$ is regular, then for any Hermitian, positive definite matrix $\mathbf{W} \in \mathbb{C}^{M^2}$, the variance of the corresponding beamformer $\mathcal{I}_{\mathbf{W}}(\mathbf{y})$ is bounded from below by the variance of the iv-f beamformer i.e.
\begin{linenomath}\begin{equation*}
 \Var{\mathcal{I}_{\mathbf{\Sigma}}(\mathbf{y})} \leq \Var{\mathcal{I}_{\mathbf{W}}(\mathbf{y})} \ .
\end{equation*}\end{linenomath}
\end{thm}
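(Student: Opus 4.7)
The plan is to rewrite $\mathcal{I}_{\mathbf{W}}(\mathbf{y})$ as a linear functional of the random vector $\vect{\cobs}$ and then recognise the resulting variance inequality as a Cauchy--Schwarz estimate. First, unfolding definition \eqref{eq:innersig} and using that $\mathbf{W}^{-1}$ is Hermitian, the beamformer \eqref{eq:wbf_def} takes the explicit form
\[
\mathcal{I}_{\mathbf{W}}(\mathbf{y}) \;=\; \frac{\vG^{*}\mathbf{W}^{-1}\vect{\cobs}}{\vG^{*}\mathbf{W}^{-1}\vG} \;=\; \mathbf{u}_{\mathbf{W}}^{*}\vect{\cobs}, \qquad \mathbf{u}_{\mathbf{W}} \;:=\; \frac{\mathbf{W}^{-1}\vG}{\vG^{*}\mathbf{W}^{-1}\vG},
\]
so that the standard identity $\Var{\mathbf{u}^{*}\mathbf{z}} = \mathbf{u}^{*}\Cov{\mathbf{z}}\mathbf{u}$ for linear functionals of a complex random vector immediately yields $V_{\mathbf{W}}(\mathbf{y}) = \mathbf{u}_{\mathbf{W}}^{*}\mathbf{\Sigma}\mathbf{u}_{\mathbf{W}}$.

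Specialising to a general $\mathbf{W}$ and to $\mathbf{W} = \mathbf{\Sigma}$ gives
\[
V_{\mathbf{W}}(\mathbf{y}) \;=\; \frac{\vG^{*}\mathbf{W}^{-1}\mathbf{\Sigma}\mathbf{W}^{-1}\vG}{\bigl(\vG^{*}\mathbf{W}^{-1}\vG\bigr)^{2}}, \qquad V_{\mathbf{\Sigma}}(\mathbf{y}) \;=\; \frac{1}{\vG^{*}\mathbf{\Sigma}^{-1}\vG},
\]
reducing the claim to the scalar inequality $\bigl(\vG^{*}\mathbf{W}^{-1}\vG\bigr)^{2} \leq \bigl(\vG^{*}\mathbf{W}^{-1}\mathbf{\Sigma}\mathbf{W}^{-1}\vG\bigr)\bigl(\vG^{*}\mathbf{\Sigma}^{-1}\vG\bigr)$. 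This is exactly Cauchy--Schwarz in $\left(\mathbb{C}^{M^{2}},\langle\cdot,\cdot\rangle_{2}\right)$ applied to $\mathbf{a} := \mathbf{\Sigma}^{1/2}\mathbf{W}^{-1}\vG$ and $\mathbf{b} := \mathbf{\Sigma}^{-1/2}\vG$; the three quadratic forms appearing above are respectively $\mathbf{a}^{*}\mathbf{a}$, $\mathbf{b}^{*}\mathbf{b}$ and $|\mathbf{a}^{*}\mathbf{b}|^{2}$, and the fractional powers of $\mathbf{\Sigma}$ exist by the regularity assumption together with positive semi-definiteness.

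I do not expect a deep obstacle: once the reformulation is set up, the whole theorem collapses to a one-line consequence of Cauchy--Schwarz. The points that require genuine care are the Hermitian bookkeeping when passing from the weighted inner product $\langle\cdot,\cdot\rangle_{\mathbf{W}}$ to the matrix-product representation of $\mathcal{I}_{\mathbf{W}}$ (which relies on $\mathbf{W}$ and $\mathbf{W}^{-1}$ both being Hermitian), and the verification of the complex variance identity $\Var{\mathbf{u}^{*}\vect{\cobs}} = \mathbf{u}^{*}\mathbf{\Sigma}\mathbf{u}$, which rests on the convention $\Cov{\vect{\cobs}} = \mathbb{E}\bigl[(\vect{\cobs}-\mathbb{E}\vect{\cobs})(\vect{\cobs}-\mathbb{E}\vect{\cobs})^{*}\bigr]$ implicit in \eqref{eq:datadistribution}.
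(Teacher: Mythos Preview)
Your proof is correct and follows essentially the same route as the paper: both arguments express $\mathcal{I}_{\mathbf{W}}(\mathbf{y})$ as a linear functional of $\vect{\cobs}$, invoke the covariance transformation rule to obtain the closed form $V_{\mathbf{W}} = \vG^{*}\mathbf{W}^{-1}\mathbf{\Sigma}\mathbf{W}^{-1}\vG \big/ \bigl(\vG^{*}\mathbf{W}^{-1}\vG\bigr)^{2}$, and then apply Cauchy--Schwarz with the splitting $\mathbf{\Sigma}^{1/2}\mathbf{W}^{-1}\vG$ and $\mathbf{\Sigma}^{-1/2}\vG$. The only cosmetic difference is that the paper applies Cauchy--Schwarz directly to the denominator and simplifies, whereas you first rearrange the claim into a single scalar inequality before invoking it.
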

\begin{proof}
To improve readability we omit the dependency on the focus point $\mathbf{y}$. Consider the elementary rule for linear transformations of complex-valued random variables  \cite[Corollary 1.1]{Andersen1995} which states that for a random variable $\mathbf{x}$ mapping to $\mathbb{C}^d$ and a matrix $\mathbf{K} \in \mathbb{C}^{n \times d} $ we have
\begin{linenomath}\begin{equation*} 
\Cov{\mathbf{K} \mathbf{x}}=  \mathbf{K}\Cov{\mathbf{x}} \mathbf{K}^*   \ .
\end{equation*}\end{linenomath}
For the choice
\begin{linenomath}\begin{equation*}
\begin{aligned}
\mathbb{C}^{M^2 \times 1} \ni \mathbf{x} &=  \vect{ \cobs} \\
\mathbb{C}^{1 \times M^2} \ni \mathbf{K} &= \frac{1}{\left \langle \vG ,  \mathbf{W}^{-1}\vG  \right \rangle_2}  \vG^*  \mathbf{W}^{-1}
\end{aligned}
\end{equation*}\end{linenomath} 
we obtain
\begin{linenomath}\begin{equation} \label{eq:varw_def}
\begin{aligned}
V_{\mathbf{W}} &= \Cov{\mathbf{K} \mathbf{x}} = \frac{ \left \langle \mathbf{W}^{-1} \vG , \mathbf{\Sigma} \mathbf{W}^{-1} \vG  \right \rangle_{2} } { \left \langle \vG ,  \mathbf{W}^{-1} \vG  \right \rangle_{2}^2 }   \\
&= \frac{ \norm{\mathbf{W}^{-1} \vect{\mathbf{G}}}_{\mathbf{\Sigma}^{-1}}^2 } { \left \langle \vG ,  \mathbf{W}^{-1} \vG  \right \rangle_{2}^2 }
\ .
\end{aligned}
\end{equation}\end{linenomath} 
Using the Cauchy-Schwarz inequality on the denominator in \eqref{eq:varw_def} yields 
\begin{linenomath}\begin{equation} \label{eq:varw_lowerbound}
\begin{aligned}
V_{\mathbf{W}} &= \frac{ \norm{\mathbf{W}^{-1} \vect{\mathbf{G}}}_{\mathbf{\Sigma}^{-1}}^2} { \left \langle \mathbf{\Sigma}^{\nicefrac{-1}{2}} \vG ,  \mathbf{\Sigma}^{\nicefrac{1}{2}}\mathbf{W}^{-1} \vG  \right \rangle_{2}^2 } \\
&\geq \frac{ \norm{\mathbf{W}^{-1} \vect{\mathbf{G}}}_{\mathbf{\Sigma}^{-1}}^2 } { \left \langle \mathbf{\Sigma}^{\nicefrac{-1}{2}} \vG ,  \mathbf{\Sigma}^{\nicefrac{-1}{2}} \vG  \right \rangle_{2}  \left \langle \mathbf{\Sigma}^{\nicefrac{1}{2}}\mathbf{W}^{-1} \vG  ,  \mathbf{\Sigma}^{\nicefrac{1}{2}}\mathbf{W}^{-1} \vG  \right \rangle_{2}   } \\
&= \frac{ \norm{\mathbf{W}^{-1} \vect{\mathbf{G}}}_{\mathbf{\Sigma}^{-1}}^2 } { \norm{ \vect{\mathbf{G}}}_{\mathbf{\Sigma}}^2 \norm{\mathbf{W}^{-1} \vect{\mathbf{G}}}_{\mathbf{\Sigma}^{-1}}^2} \\
&= \frac{ 1 } { \norm{ \vect{\mathbf{G}}}_{\mathbf{\Sigma}}^2 } \ .
\end{aligned}
\end{equation}\end{linenomath}
Inserting $\mathbf{W} = \mathbf{\Sigma}$ in \eqref{eq:varw_def} yields
\begin{linenomath}\begin{equation*}
V_{\mathbf{\Sigma}} = \frac{ 1 } { \norm{ \vect{\mathbf{G}}}_{\mathbf{\Sigma}}^2 }
\end{equation*}\end{linenomath}
i.e. the lower bound in \eqref{eq:varw_lowerbound} is attained for $\mathbf{W} = \mathbf{\Sigma}$.
\end{proof}
Theorem \ref{thm:varopt} shows that the beamformer for the weighting choice \eqref{eq:wc_iv-f} minimizes the variance among all possible choices. Due to this minimizing property one may ask if there is a relation to Capon's Method which is often called \textit{Minimum Variance Method} in the literature. We want to emphasize that the variance minimizing property of Capon's method is not the same as the minimizing property of Theorem \ref{thm:varopt}. An alternative definition of Capon's method is
\begin{linenomath}\begin{equation*}
\mathcal{I}^{\text{Cap}} = \mathbf{w}_{\text{Cap}}^* \cobs \mathbf{w}_{\text{Cap}} \ .
\end{equation*}\end{linenomath}
The weight vector $\mathbf{w}_{\text{Cap}}$ is defined as the solution of the minimization problem 
\begin{linenomath}\begin{equation}
\min \limits_{\mathbf{w} \in \mathbb{C}^{M}} \mathbf{w}^* \cobs \mathbf{w}  \quad \text{subject to} \quad \real{\mathbf{g}^* \mathbf{w} } = 1 \ . \label{eq:capon_wmin}
\end{equation}\end{linenomath}
Note that for $\mathbf{p} = \left(p(\mathbf{x}_1), \dots , p(\mathbf{x}_M) \right)^{\top}$ it holds that 
\begin{linenomath}\begin{equation*}
\mathbb{E}\left[ \abs{\mathbf{w}^* \mathbf{p}}^2  \right] = \mathbf{w}^*\mathbb{E}\left[ \mathbf{p}\mathbf{p}^*  \right] \mathbf{w} \approx \mathbf{w}^* \cobs \mathbf{w} \ .
\end{equation*}\end{linenomath}
The mean squared value $\mathbb{E}\left[ \abs{\mathbf{w}^* \mathbf{p}}^2  \right]$ equals the variance $\Var{\mathbf{w}^* \mathbf{p}}$ if $\mathbf{p}$ has zero mean. On the one hand that explains the commonly used term \textit{Minimum Variance Method} (see also \cite{Johnson1993}) and on the other hand it shows that the variance that is minimized for Capon's method $\Var{\mathbf{w}^* \mathbf{p}}$ differs from the variance $\Var{\mathcal{I}_{\mathbf{W}}}$, considered in Theorem \ref{thm:varopt}. In particular Capon's method and the iv-f beamforming method do not yield the same results in general. However, under additional assumptions on the data, both methods yield similar results, as we will demonstrate in the remainder of this section. \ \\ \\
For $\cobs$ being invertible, the solution to \eqref{eq:capon_wmin} is explicitly given by
\begin{linenomath}\begin{equation*}
\mathbf{w}_{\text{Cap}} = \frac{\left(\cobs\right)^{-1} \mathbf{g}}{\mathbf{g}^*\left(\cobs\right)^{-1} \mathbf{g} } \ ,
\end{equation*}\end{linenomath}
which yields the definition of Capon's method given in Eq. \eqref{eq:caponbf_def}. For the remaining analysis within this section we assume that $\cobs$ is positive definite (and thus invertible) along with the subsequent additional assumptions. 
\begin{ass}[Pressure signals] \label{as:proper} \ \\The pressure signal vector $\mathbf{p} = \left(p(\mathbf{x}_1), \dots , p(\mathbf{x}_M) \right)^{\top}$ has the following properties 
\begin{enumerate}[(i)]
\item $\mathbf{p}$ is a complex $M$-dimensional Gaussian random variable \label{it:fst_p},
\item $\mathbb{E}\left[  \mathbf{p} \right] = 0 $ \label{it:scnd_p},
\item $\mathbb{E}\left[  \mathbf{p} \mathbf{p}^{\top} \right] = \mathbf{0} $ \ .  \label{it:thrd_p} 
\end{enumerate}
\end{ass}
Note that property (\ref{it:scnd_p}) follows directly under Assumption \ref{as:cor}.\ref{it:zermean_parts}. Whereas properties (\ref{it:fst_p}) and (\ref{it:scnd_p}) are quite common and intuitive, property (\ref{it:thrd_p}) seems less intuitive and might also be too restrictive. Complex random variables that fulfill property (\ref{it:thrd_p}) are usually called \textit{proper} \cite[Def. 2.1, p.35]{Schreier2014}. Due to Assumption \ref{as:proper} (\ref{it:fst_p})-(\ref{it:scnd_p}), we can apply the following explicit formula for the covariances of cross correlations 
\begin{linenomath}\begin{equation} \label{eq:covcov_formula}
\begin{aligned}
& \ \ \ \ \Cov{ p(\mathbf{x}_m) p(\mathbf{x}_l)^*,p(\mathbf{x}_{m'})p(\mathbf{x}_{l'})^*    } \\
&= \mathbb{E}\left(p(\mathbf{x}_m) p(\mathbf{x}_{m'})^* \right)  \mathbb{E}\left(p(\mathbf{x}_l) p(\mathbf{x}_{l'})^* \right)^* \\
&+ \mathbb{E}\left(p(\mathbf{x}_m)p(\mathbf{x}_{l'}) \right)  \mathbb{E}\left(p(\mathbf{x}_l)p(\mathbf{x}_{m'}) \right)^*  \ .
\end{aligned} 
\end{equation}\end{linenomath}
This result follows from Isserlis' theorem \cite{Isserlis1916, Isserlis1918}, see \cite{Gizon2004} and \cite{Fournier2014} for a more detailed discussion. Furthermore, due to Assumption \ref{as:proper} (\ref{it:thrd_p}), the second summand in \eqref{eq:covcov_formula} vanishes and hence
\begin{linenomath}\begin{equation} \label{eq:covcov_formula_proper}
\begin{aligned}
\Cov{ p(\mathbf{x}_m) p(\mathbf{x}_l)^*,p(\mathbf{x}_{m'})p(\mathbf{x}_{l'})^*    } 
&= \mathbb{E}\left(p(\mathbf{x}_m) p(\mathbf{x}_{m'})^* \right)  \mathbb{E}\left(p(\mathbf{x}_l) p(\mathbf{x}_{l'})^* \right)^* \\
& \approx \cobs_{mm'} \cobs_{l'l} \ .
\end{aligned}
\end{equation}\end{linenomath}
Relation \eqref{eq:covcov_formula_proper} reveals that we can estimate the covariance matrix $\Cov{\vect{\mathbf{C}}^{\text{obs}}}$ by
\begin{linenomath}\begin{equation*}
\mathbf{\Sigma}^{\text{est}} = \left( \cobs \right)^{\top} \otimes \cobs \ ,
\end{equation*}\end{linenomath}
which is exactly the weighting matrix that represents Capon's method (see Eq. \eqref{eq:wc_capon}). Hence, for proper Gaussian pressure signals with zero mean, Capon's method and beamforming based on the Mahalanobis distance yield the same result if the covariance matrix $\mathbf{\Sigma}$ is estimated according to \eqref{eq:covcov_formula_proper}. Again we want to emphasize that in general both methods are not equivalent and do not yield  the same results. However, if the data satisfy all three items of Assumption \ref{as:proper} with sufficient accuracy, Capon's method is a good approximation of the iv-f beamformer. In Section \ref{sec:results-experimental} we will discuss some statistical measures regarding the plausibility of the properties (\ref{it:fst_p})-(\ref{it:thrd_p}) for an experimental dataset.


\section{Computational aspects}\label{sec:computational}
In this section we will discuss some relevant computational aspects for the implementation of the presented methods.
\subsection{Removal of sensor pairs}
For any subset of indices $R \subset \lbrace 1, \dots , M \rbrace^2 $ (see also \cite{Sijtsma2004}) one may remove all components from the vectorizations $\vect{\cobs}$ and $\vect{\mathbf{G}}$ that correspond to an index pair $(m,l) \in R$. Similarly one removes all corresponding rows and columns of the weighting matrix $\mathbf{W}$. All presented methods can now be stated with respect to those reduced quantities. If no further knowledge on the values of the boundary layer noise matrix $\mathbf{D}$ is imposed, a common choice for the removal indices is $R = \lbrace (m,l): \ m = l \rbrace$ i.e. the diagonal of the measured CSM is not taken into account. This is also known as \textit{diagonal removal}. Note that the result of Theorem \ref{thm:varopt} remains valid for any choice of $R$.
\subsection{Broad band source maps}
In order to reduce noise effects in the imaging results, beamforming functionals are often averaged over frequency bands. More precisely for a center frequency $\omega_0$, a surrounding frequency band $B(\omega_0)$ and an imaging functional $\mathcal{I}$, the averaged source map at a focus point $\mathbf{y}$ yields the value
\begin{linenomath}\begin{equation*}
\int_{B(\omega_0)} \mathcal{I}(\mathbf{y}, \omega) d \omega \ .
\end{equation*}\end{linenomath}
A popular choice for $B$ is the third octave band according to ISO 266:1997 \cite{enISO266}
\begin{linenomath}\begin{equation*}
B(\omega_0) =  \left[2^{\nicefrac{-1}{6} }  \omega_0, \ 2^{\nicefrac{1}{6} }  \omega_0 \right] \ .
\end{equation*}\end{linenomath}

\subsection{Covariance estimation}
In order to apply weighting matrices $\mathbf{W}$ that depend on data (co)variances, an appropriate estimator of the noise covariance matrix $\mathbf{\Sigma}$ is needed. 
\paragraph{Estimation for zero mean Gaussian signals:} \ \\
Under Assumption \ref{as:proper} (\ref{it:fst_p})-(\ref{it:scnd_p}), we can employ the covariance Formula \eqref{eq:covcov_formula}. Then expressions of the type $\mathbb{E}\left(p(\mathbf{x}_m) p(\mathbf{x}_{l})^* \right)$ can be estimated by the corresponding entry of the CSM $\cobs$. Expressions of the type $\mathbb{E}\left(p(\mathbf{x}_m) p(\mathbf{x}_{l}) \right)$ can be estimated by the corresponding entry of the measured pseudo cross spectral matrix (PCSM) $\mathbf{C}^{\text{ps}}$ 
\begin{linenomath}\begin{equation} \label{eq:pcsm_def}
\mathbf{C}^{\text{ps}}_{ml} = \mathbb{M} \left\lbrace p_j(\mathbf{x}_m)p_j(\mathbf{x}_l) \right\rbrace = \frac{1}{J}\sum \limits_{j=1}^J  p_j(\mathbf{x}_m)p_j(\mathbf{x}_l) \ .
\end{equation}\end{linenomath}
Note that Formula \eqref{eq:covcov_formula} ensures that the estimated covariance matrix is Hermitian but it does not ensure positive semi-definiteness or regularity. 
\paragraph{Estimation by sample covariances:} \ \\
Without any additional assumptions the estimation may be done by sample covariances of the CSM entries. The number of block samples is denoted by $J$. For a block sample index $j$ denote by $\cobs_j = \mathbf{p}_j\mathbf{p}_j^*$ the $j-$th CSM sample. $\mathbf{\Sigma}$ may then be estimated by
\begin{linenomath}\begin{equation} \label{eq:covcov_sample}
\mathbf{\Sigma} \approx \mathbf{\Sigma}^{\text{samp}} = \frac{1}{J} \sum \limits_{j=1}^J \left[ \vect{\cobs_j} - \vect{\cobs}  \right] \left[ \vect{\cobs_j} - \vect{\cobs}  \right]^* \ .
\end{equation}\end{linenomath}
Since each summand is a rank-one matrix and the rank of the sum can be bounded from above by the sum over the ranks we obtain
\begin{linenomath}\begin{equation}
\text{rank}\left( \mathbf{\Sigma}^{\text{samp}} \right) \leq J \ . \label{eq:rankineq}
\end{equation}\end{linenomath}
One has to ensure that the estimated covariance matrix is regular. By \eqref{eq:rankineq} this cannot hold true for a small number of block samples $J < M^2$ (where $M$ denotes the number of microphones). Since $M^2 \sim 10^4$ for many aeroacoustic measurement setups, the number of block samples will not be sufficiently large in many cases. 
\paragraph{Definiteness and regularity:} \ \\
For a covariance estimator $\mathbf{\Sigma^{\mathrm{est}}}$ (derived by \eqref{eq:covcov_formula} or \eqref{eq:covcov_sample}) that is not positive semi-definite or invertible one may proceed as follows:
Choose a regularization parameter $\alpha>0$ and solve 
\begin{linenomath}\begin{equation} \label{eq:closest_cov}
\begin{aligned}
\mathbf{\Sigma}^{\star} = \alpha \mathbf{I} \ + \ 
&\argmin \limits_{\mathbf{\tilde{\Sigma}^{\star}}} \norm{ \mathbf{\tilde{\Sigma}^{\star}} + \alpha \mathbf{I} - \mathbf{\Sigma}^{\mathrm{est}}}^2_F  \\ &\text{subject to} \ \ \mathbf{\tilde{\Sigma}^{\star}} \ \text{is positive semi-definite}.
\end{aligned}
\end{equation}\end{linenomath} 
Then $\mathbf{\Sigma}^{\star}$ is the best approximation of $\mathbf{\Sigma}^{\mathrm{est}}$ among all positive semi-definite matrices whose inverse is bounded by $\frac{1}{\alpha}$.
The minimization problem in \eqref{eq:closest_cov} is uniquely solvable \cite[Theorem 8.8]{Higham2008} and may be solved numerically by the methods described in \cite[Chapter 8]{Higham2008} or by an optimization toolbox such as CVX \cite{Grant2008, Grant2014} or SDPT$^3$ \cite{Toh1999}.

\subsection{Computational effort}
The number of microphones is usually of the order $M \sim 10^2$. Considering the imaging functional evaluations for a diagonal weighting choice e.g. iv-d \eqref{eq:wc_iv-d} or conventional \eqref{eq:wc_conventional} we conclude that the computational complexity of one evaluation (i.e. beamforming to one focus point) is $\mathcal{O}\left( M^2 \right)$. For non-diagonal weightings e.g. iv-f \eqref{eq:wc_iv-f} the evaluation step requires the solution of a linear system with a system matrix of size $M^2 \times M^2$. If a direct solver is applied the computational complexity is $\mathcal{O}\left( M^{2 \gamma} \right)$ for some $\gamma \in (2, 3]$ depending on the implementation of the linear system solver. This shows a significantly increased effort for non-diagonal weightings if the system matrix is dense. However, to reduce the computational effort in that case one may approximate the system matrix by a low-rank perturbation of a diagonal matrix 
\begin{linenomath}\begin{equation*}
    \mathbf{W} \approx \mathbf{D} + \mathbf{L}^*\mathbf{L} \ ,
\end{equation*}\end{linenomath}
where $\mathbf{D}$ is diagonal and $\mathbf{L} \in \mathbb{C}^{L \times M^2}$ with $L \ll M^2$. The Sherman-Morrison-Woodbury formula then implies
\begin{linenomath}\begin{equation*}
\left( \mathbf{D} + \mathbf{L}^*\mathbf{L} \right)^{-1} = \mathbf{D}^{-1} - \mathbf{D}^{-1} \mathbf{L}^* \left(\mathbf{I} + \mathbf{L} \mathbf{D}^{-1} \mathbf{L}^*  \right)^{-1} \mathbf{L} \mathbf{D}^{-1} \ .
\end{equation*}\end{linenomath}
That allows a more efficient solution of the linear system, since the term inside the round brackets is a $L \times L$ matrix. \ \\ 
Note that the evaluation of the imaging functional can easily be parallelized since the evaluations for different map points and frequencies are completely decoupled.

\subsection{Choice of the regularization parameter}
For the regularized version of DAMAS-NNLS \eqref{eq:damas_nnlsw_reg}, a parameter choice rule for the regularization parameter $\alpha$ is needed. One of the most well-known rules is provided by  \textit{Morozov's discrepancy principle} \cite{Morozov1968}. Assume that a measure of the data noise level $\delta$ is a priori known and let $\mathbf{q}_{\alpha}$ denote the solution of the regularized DAMAS-NNLS problem \eqref{eq:damas_nnlsw_reg}. The discrepancy principle states to choose $\alpha$ by
\begin{linenomath}\begin{equation}
    \alpha_{\delta} =  \sup \lbrace \alpha>0: \ \norm{\mathbf{H}_{\mathbf{W}}\mathbf{q}_{\alpha}-\mathbf{b}_{\mathbf{W}}}_2 \leq \tau \delta  \rbrace  \label{eq:discrepancy}
\end{equation}\end{linenomath}
with some constant $\tau \geq 1$. Since the discrepancy principle tends to lead to oversmoothing for random noise and large $N$, the constant $\tau$ should not be chosen much greater than $1$ as long as stochastic noise dominates systematic errors. The (stochastic) noise level of a beamforming map is measured by the root mean squared deviation
\begin{linenomath}\begin{equation*}
    \delta^{\mathrm{rms}}_{\mathbf{W}} = \sqrt{\sum \limits_{n=1}^N \Var{\mathcal{I}_{\mathbf{W}}(\mathbf{y}_n)} }  \ . 
\end{equation*}\end{linenomath}
With $ \tilde{\mathbf{\Sigma}} = \Cov{\vect{\mathbf{p}\mathbf{p}^*}} = J \mathbf{\Sigma}$ , the root mean square deviation is equivalently represented by
\begin{linenomath}\begin{equation*}
    \delta^{\mathrm{rms}}_{\mathbf{W}} = J^{-\nicefrac{1}{2}} \sqrt{ \mathlarger{\mathlarger{\sum}}_{n=1}^N \frac{ \left \langle \mathbf{W}^{-1} \vGn , \tilde{\mathbf{\Sigma}} \mathbf{W}^{-1} \vGn  \right \rangle_{2} } { \left \langle \vGn ,  \mathbf{W}^{-1} \vGn  \right \rangle_{2}^2 } }  \ ,
\end{equation*}\end{linenomath}
where the second factor is independent of the number of block samples $J$. Moreover, we choose $\tau =1.5$, which effectively allows for systematic errors of size smaller than $50\%$ of the size of the stochastic noise. 

 
\section{Results on Synthetic Data} \label{sec:results-synthetic}
In this section we test the iv-d and iv-f weighting on a simple synthetic dataset with a single monopole source. The results are compared to those obtained by a standard weighting with respect to the potential improvements regarding resolution and the reduction of noise effects. The data is generated using an array of 64 microphones that has been previously used for synthetic benchmark exercises \cite{Sarradj2017} and performance analysis of microphone array methods \cite{Herold2017}. The array has an aperture of $1.5$m and is located on a x-y-plane centered at the origin. The focus plane is located at $z=0.75\,\mathrm{m}$ and $x,y \in [-0.5\, \mathrm{m}, 0.5\,\mathrm{m}]$ and discretized by an equidistant grid with a resolution of $\Delta x = \Delta y = 0.025\,\mathrm{m}$ . 
\ \\ \ \\
We consider a single monopole source located at the center of the focus region $(0,0,0.75)^{\top}$. The ensemble consists of $J=1000$ independent pressure samples of the form
\begin{linenomath}\begin{equation}
\mathbf{p}^{(j)} =  \eta^{(j)} p_0 \mathbf{g}(\mathbf{0}) + \rho \boldsymbol{\epsilon}^{(j)} , \label{eq:synth-samples}
\end{equation}\end{linenomath}
where $\eta^{(j)}$ is drawn from a one-dimensional standard complex normal distribution and $\boldsymbol{\epsilon}^{(j)}$ from a M-dimensional standard complex normal distribution. By this procedure, we make sure that the pressure samples are complex Gaussian with zero mean.
For the additive noise power $\rho$ we consider three levels such that
\begin{linenomath}\begin{equation*}
\Delta_{\mathrm{noise}} := 20 \log_{10}\left(\frac{p_0}{\rho} \right) \in \lbrace 20, 10,0 \rbrace \ .
\end{equation*}\end{linenomath}
For each dataset beamforming and DAMAS-NNLS solutions are computed. The regularization parameter is chosen by the discrepancy principle with $\tau = 1$ and $\delta = \delta_{\mathbf{W}}^{\mathrm{rms}}$. \ \\ \ \\
To compare the results of different weightings, three measures are used, one for the resolution, one for the SNR and one that combines effects on resolution and lowered noise. Those measures were proposed in \cite{Lehmann2020pre} and are defined in the following.
\paragraph{Resolution measure}
Let $\mathbf{y}_s$ denote the position of the global sourcemap maximum and $\mathcal{L}_{1\mathrm{dB}}$ the level set of the value $1$dB below the maximum.
The Resolution measure is then given by
\begin{linenomath}\begin{equation}
\mathrm{Resolution} = \sup_{\mathbf{y} \in \mathcal{L}_{1\mathrm{dB}}} \norm{\mathbf{y}_s - \mathbf{y}} \label{eq:res_measure}
\end{equation}\end{linenomath}
i.e. the largest distance from the maximum location to the -1dB levelset.

\begin{figure}[h]
\begin{subfigure}{.32\textwidth}
  \centering
  \includegraphics[width=\textwidth]{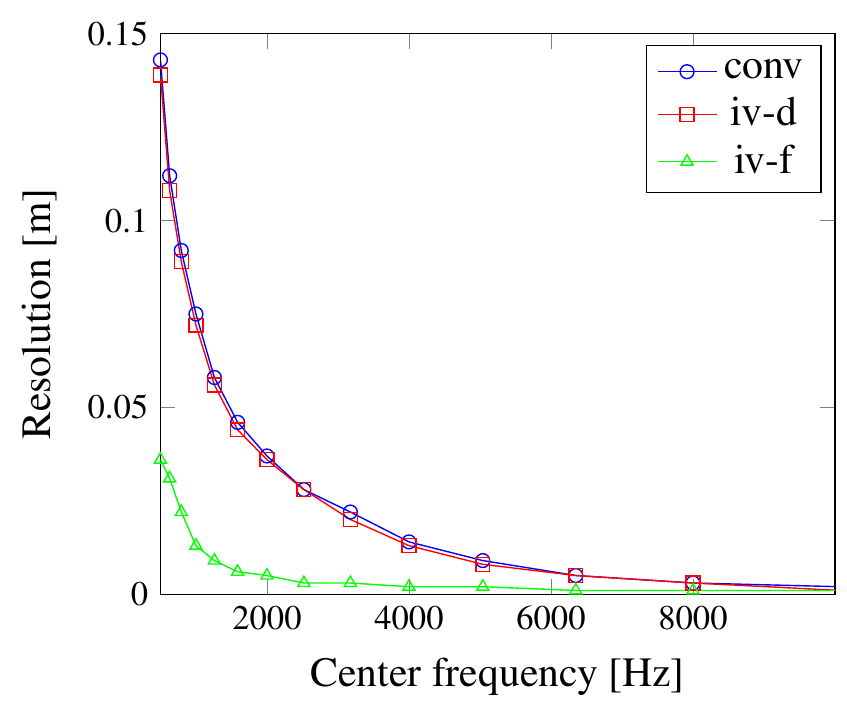}
  \caption{$\Delta_{\mathrm{noise}} = 20$dB}
\end{subfigure}
\begin{subfigure}{.32\textwidth}
  \centering
  \includegraphics[width=\textwidth]{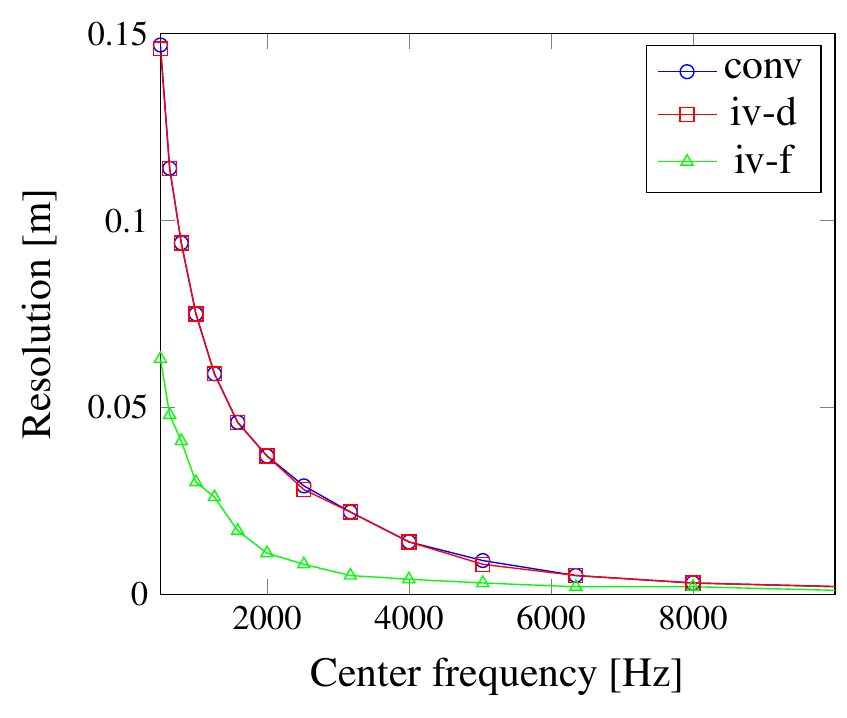}
  \caption{$\Delta_{\mathrm{noise}} = 10$dB}
\end{subfigure}
\begin{subfigure}{.32\textwidth}
  \centering
  \includegraphics[width=\textwidth]{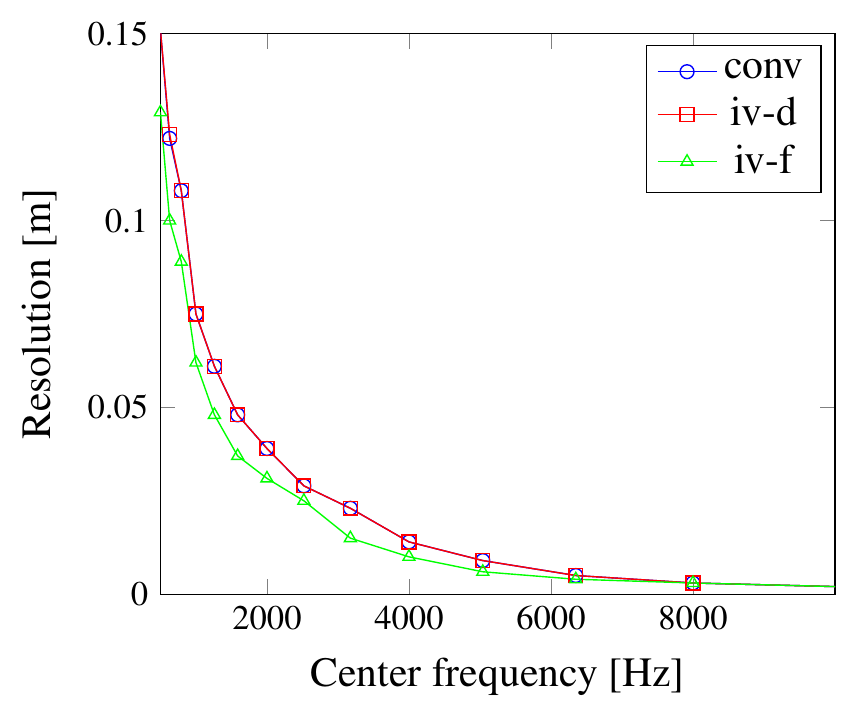}
  \caption{$\Delta_{\mathrm{noise}} = 0$dB}
\end{subfigure}%
\\
\vspace*{0.1cm}
\\
\begin{subfigure}{.32\textwidth}
  \centering
  \includegraphics[width=\textwidth]{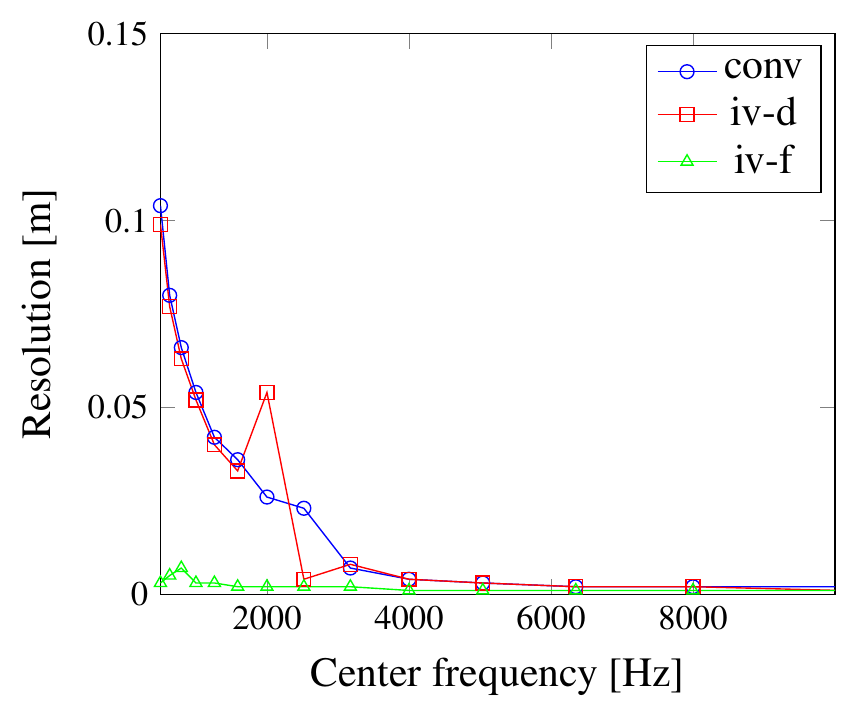}
  \caption{$\Delta_{\mathrm{noise}} = 20$dB}
\end{subfigure}
\begin{subfigure}{.32\textwidth}
  \centering
  \includegraphics[width=\textwidth]{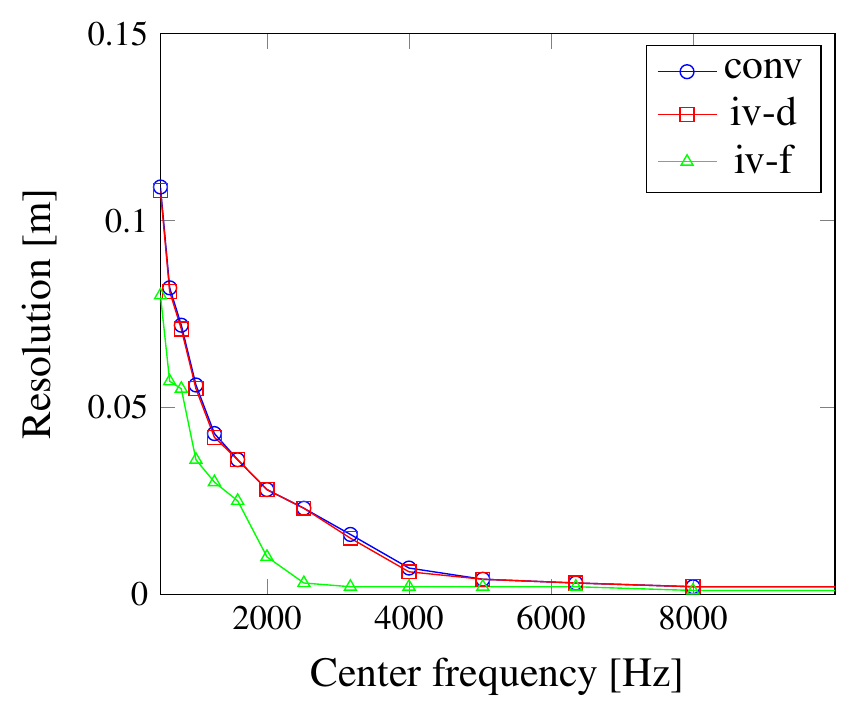}
  \caption{$\Delta_{\mathrm{noise}} = 10$dB}
\end{subfigure}
\begin{subfigure}{.32\textwidth}
  \centering
  \includegraphics[width=\textwidth]{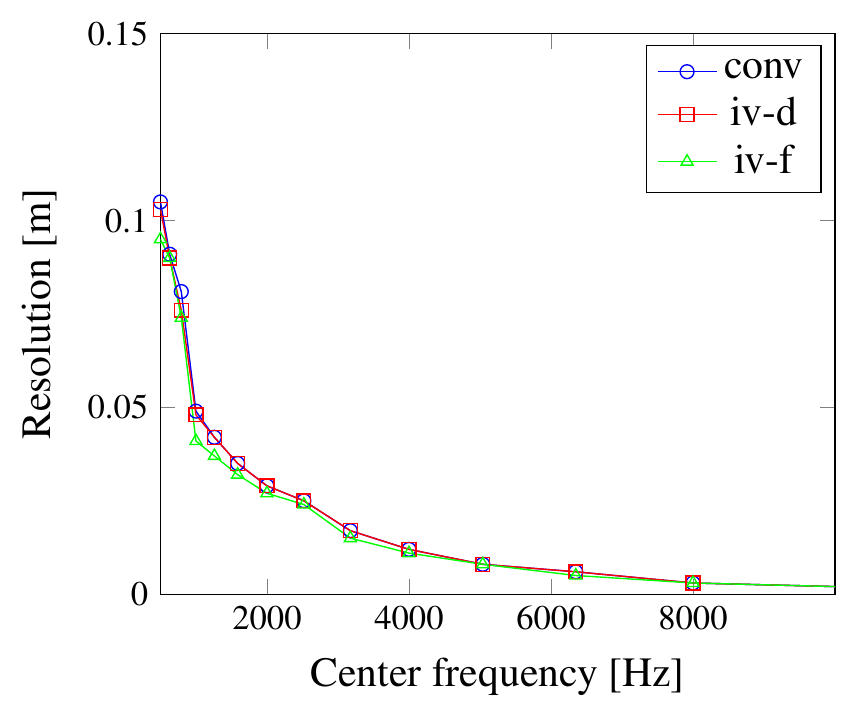}
  \caption{$\Delta_{\mathrm{noise}} = 0$dB}
\end{subfigure}%
\caption{Resolution measure for beamforming (a)-(c) and regularized DAMAS-NNLS with discrepancy principle (d)-(f).}
\label{fig:res-maps}
\end{figure}

\paragraph{SNR measure}
As a measure for the SNR we compute the ratio of the main lobe to the largest sidelobe. Without changing the result we can normalize each sourcemap $S$ such that the maximum source power level is always at $0$dB. For a given sound pressure level $L < 0$ dB we define
\begin{linenomath}\begin{equation*}
D(L) = \lbrace \mathbf{y} \in \mathcal{Y} : \ S(\mathbf{y}) > L \rbrace
\end{equation*}\end{linenomath} 
the set of points, where the sourcemap is greater than the level $L$. By $\mathrm{concomp}(D(L))$ we denote the number of connected components of $D(L)$. The maximum sidelobe is then given by
\begin{linenomath}
\begin{equation}
\mathrm{SNR} = \abs{\sup\lbrace L<0: \ \mathrm{concomp}(D(L))>1  \rbrace}\ . \label{eq:snr_measure}
\end{equation}
\end{linenomath}
To illustrate this metric one may imagine a sourcemap where all regions with level lower than $L$ are hidden. Then one reduces $L$ as long as the visible region is a single connected domain i.e. the main lobe. As soon as a second disjoint domain (the largest sidelobe) appears, the reduction is stopped and the current absolute value of $L$ yields the SNR measure. 
\begin{figure}[h]
\begin{subfigure}{.32\textwidth}
  \centering
  \includegraphics[width=\textwidth]{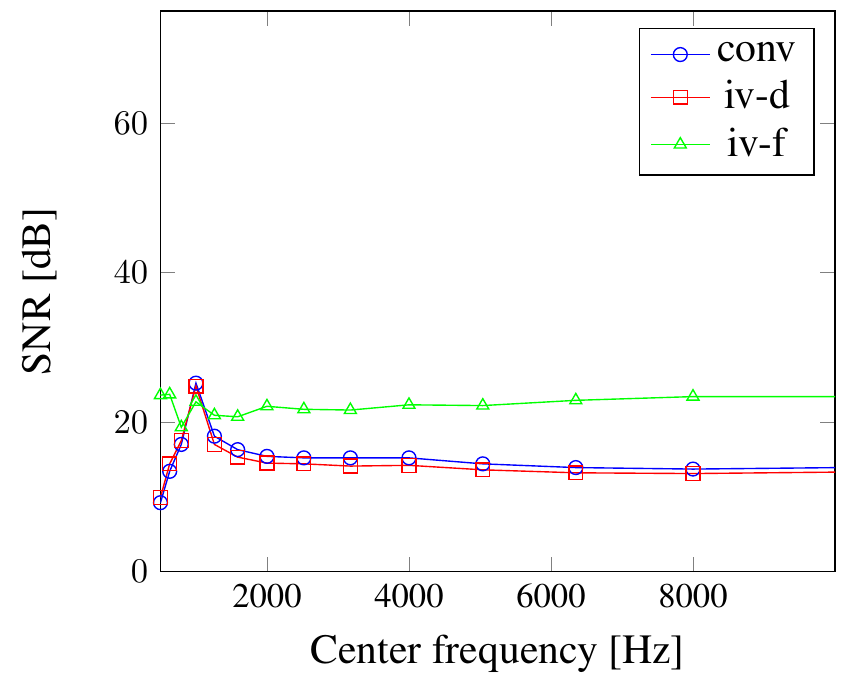}
  \caption{$\Delta_{\mathrm{noise}} = 20$dB}
\end{subfigure}
\begin{subfigure}{.32\textwidth}
  \centering
  \includegraphics[width=\textwidth]{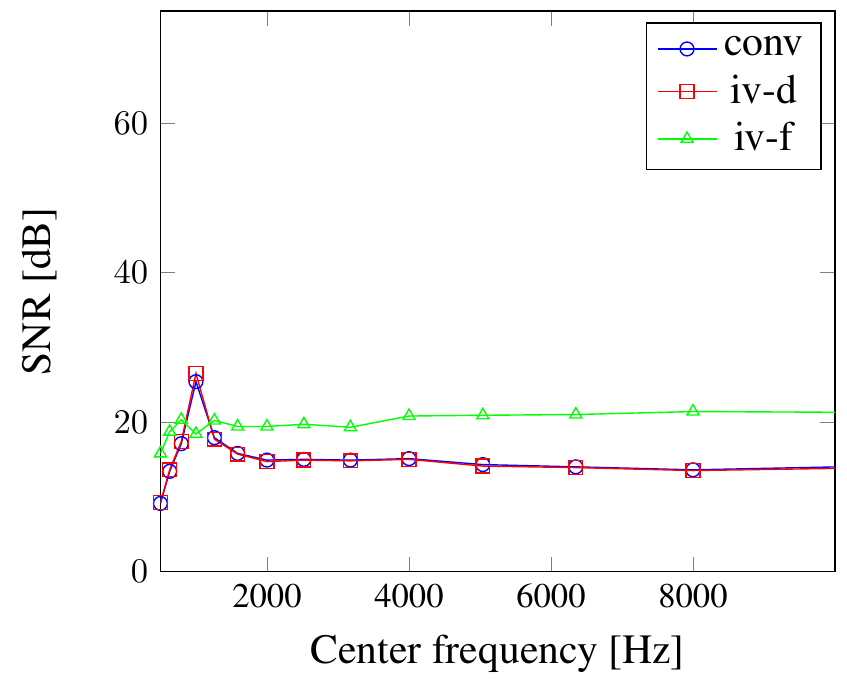}
  \caption{$\Delta_{\mathrm{noise}} = 10$dB}
\end{subfigure}
\begin{subfigure}{.32\textwidth}
  \centering
  \includegraphics[width=\textwidth]{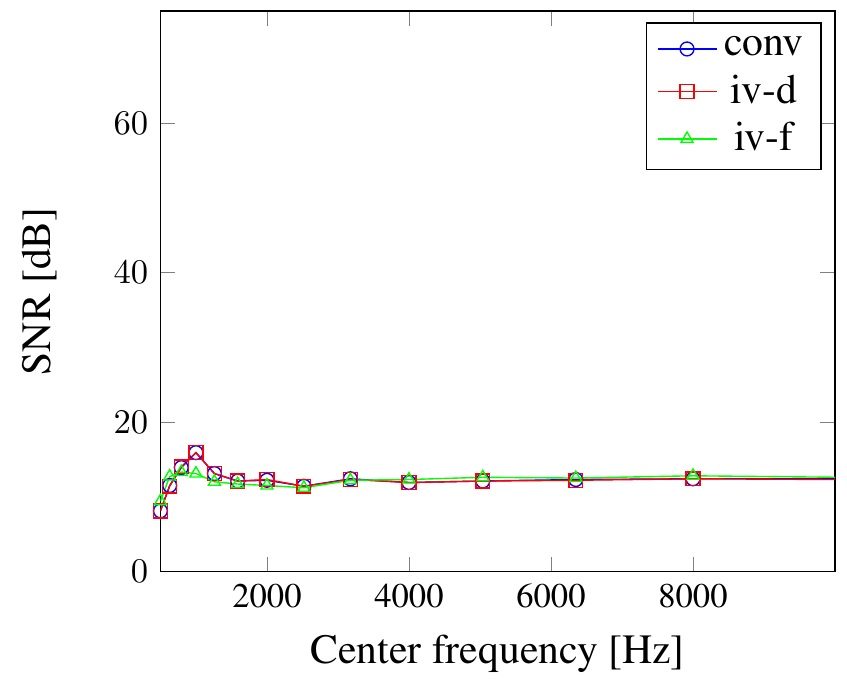}
  \caption{$\Delta_{\mathrm{noise}} = 0$dB}
\end{subfigure}%
\\
\vspace*{0.1cm}
\\
\begin{subfigure}{.32\textwidth}
  \centering
  \includegraphics[width=\textwidth]{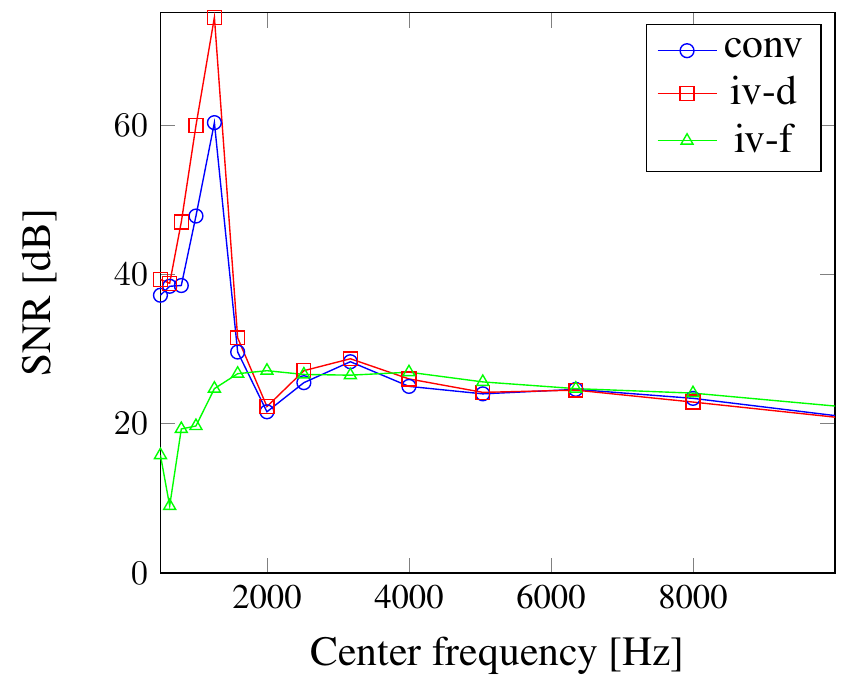}
  \caption{$\Delta_{\mathrm{noise}} = 20$dB}
\end{subfigure}
\begin{subfigure}{.32\textwidth}
  \centering
  \includegraphics[width=\textwidth]{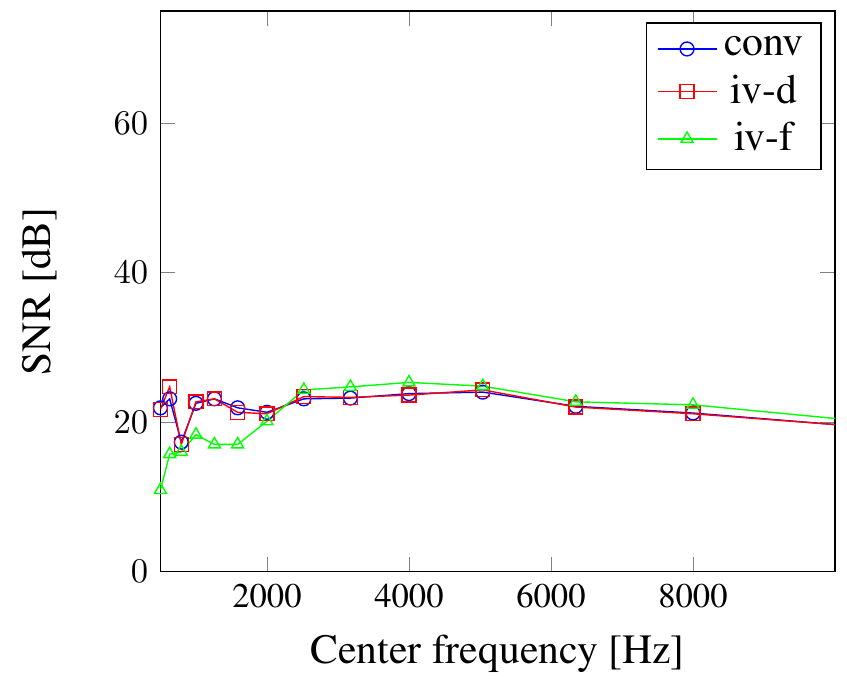}
  \caption{$\Delta_{\mathrm{noise}} = 10$dB}
\end{subfigure}
\begin{subfigure}{.32\textwidth}
  \centering
  \includegraphics[width=\textwidth]{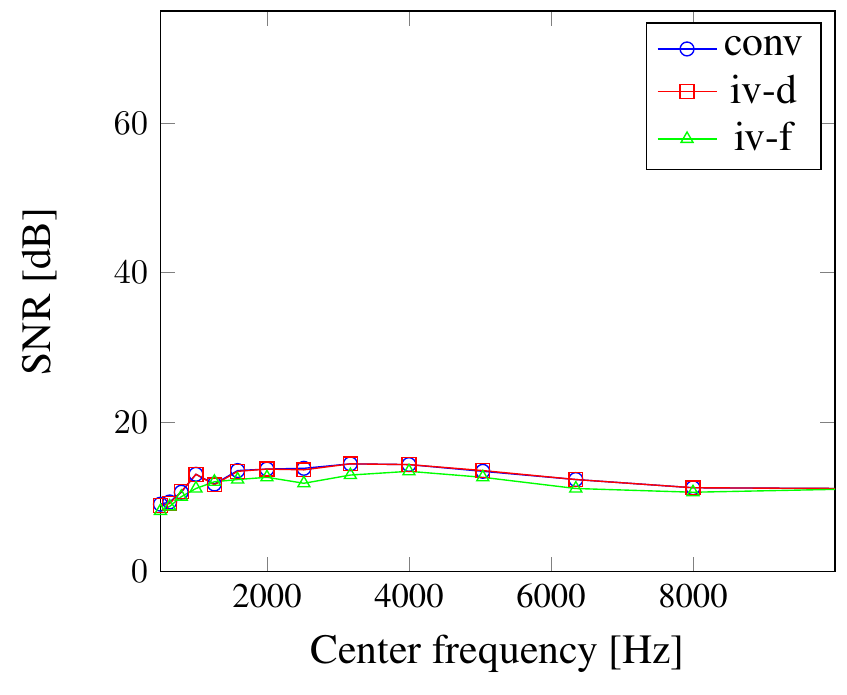}
  \caption{$\Delta_{\mathrm{noise}} = 0$dB}
\end{subfigure}%
\caption{SNR measure for beamforming (a)-(c) and regularized DAMAS-NNLS with discrepancy principle (d)-(f).}
\label{fig:snr-maps}
\end{figure}
\paragraph{Combined measure on Resolution and SNR}
The last measure is designed to provide an indicator on the overall quality of the imaging result. It is defined as the ratio of the maximum source power level to the average source power level of the entire sourcemap and will be denoted by \textit{source-to-pattern ratio} (SPR). For a sourcemap $S$ that maps each focus point $\mathbf{y}_n$ to a source power estimator $S(\mathbf{y}_n)$ this yields
\begin{linenomath}
\begin{equation}
    \mathrm{SPR} = 10\log_{10}\left( \frac{S(\mathbf{y}_s)}{\frac{1}{N} \sum_{n=1}^N  S(\mathbf{y}_n)} \right) \ . \label{eq:spr_measure}
\end{equation}
\end{linenomath}
\begin{figure}[h]
\begin{subfigure}{.32\textwidth}
  \centering
  \includegraphics[width=\textwidth]{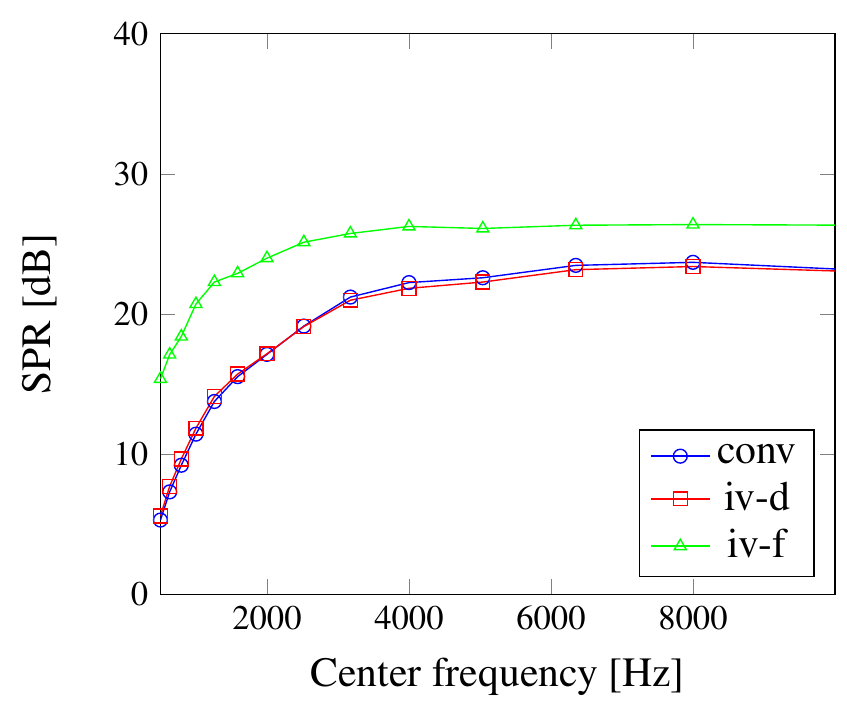}
  \caption{$\Delta_{\mathrm{noise}} = 20$dB}
\end{subfigure}
\begin{subfigure}{.32\textwidth}
  \centering
  \includegraphics[width=\textwidth]{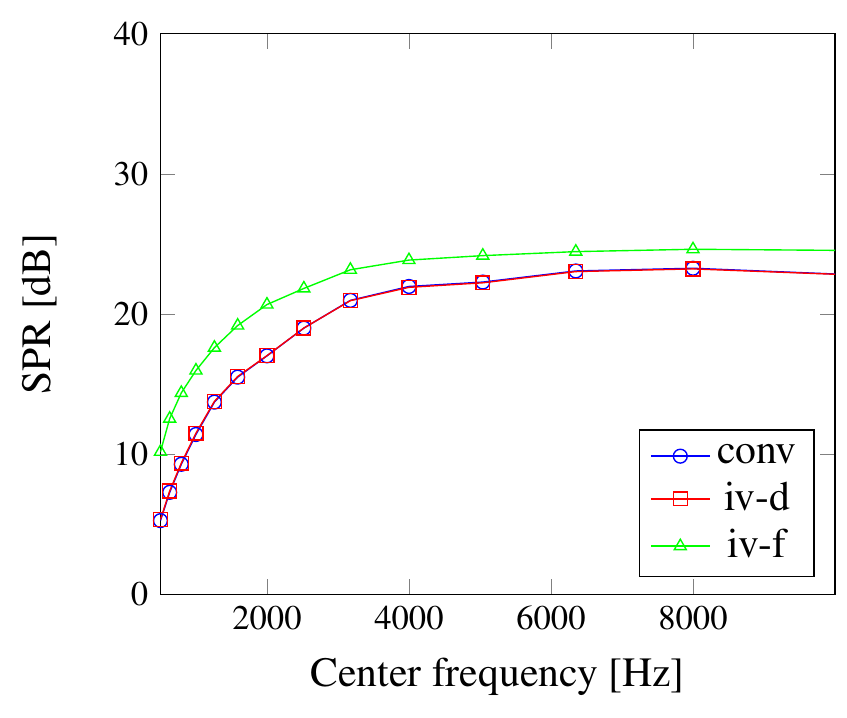}
  \caption{$\Delta_{\mathrm{noise}} = 10$dB}
\end{subfigure}
\begin{subfigure}{.32\textwidth}
  \centering
  \includegraphics[width=\textwidth]{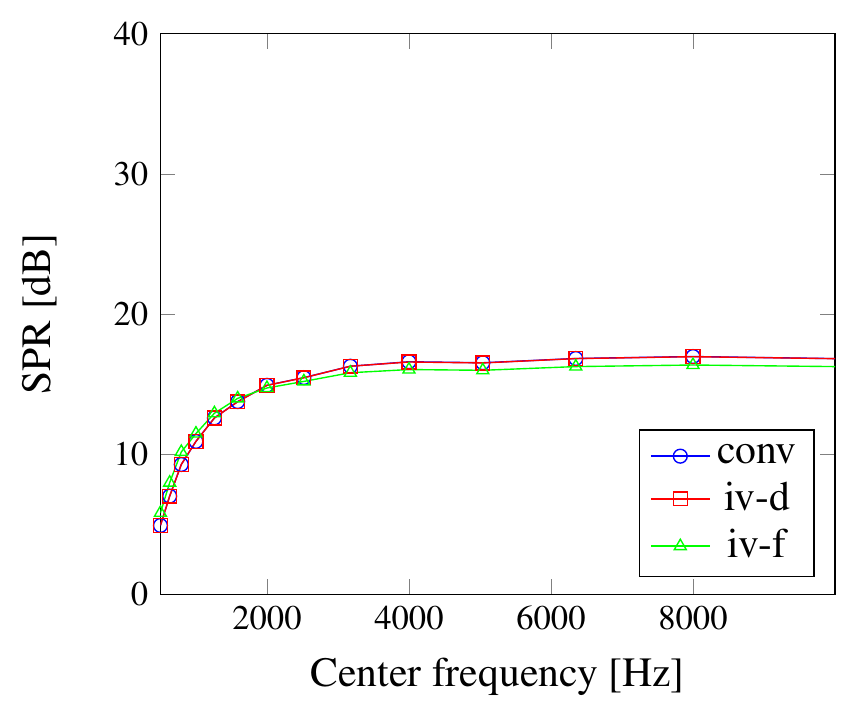}
  \caption{$\Delta_{\mathrm{noise}} = 0$dB}
\end{subfigure}%
\\
\vspace*{0.1cm}
\\
\begin{subfigure}{.32\textwidth}
  \centering
  \includegraphics[width=\textwidth]{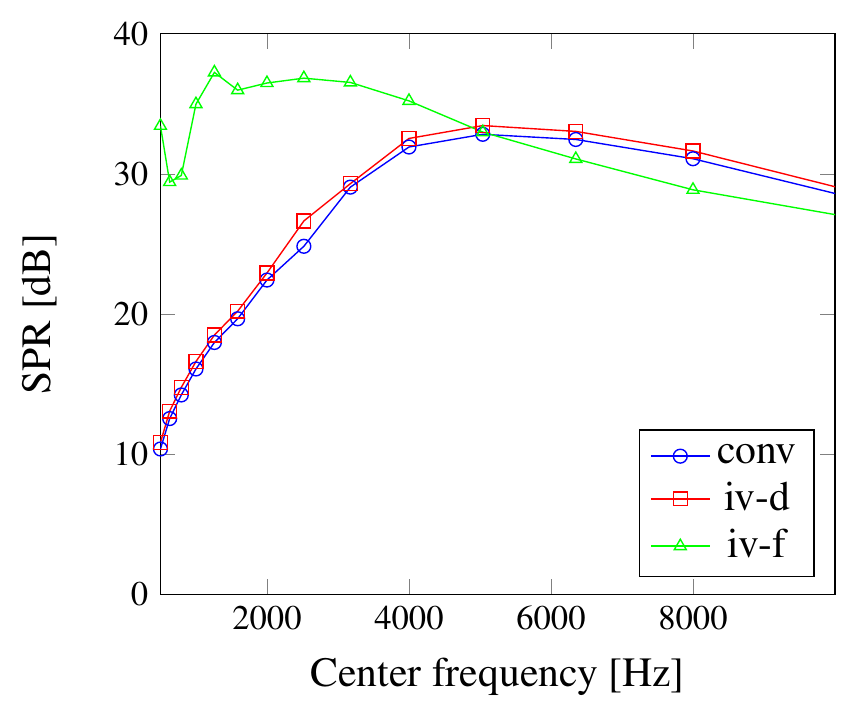}
  \caption{$\Delta_{\mathrm{noise}} = 20$dB}
\end{subfigure}
\begin{subfigure}{.32\textwidth}
  \centering
  \includegraphics[width=\textwidth]{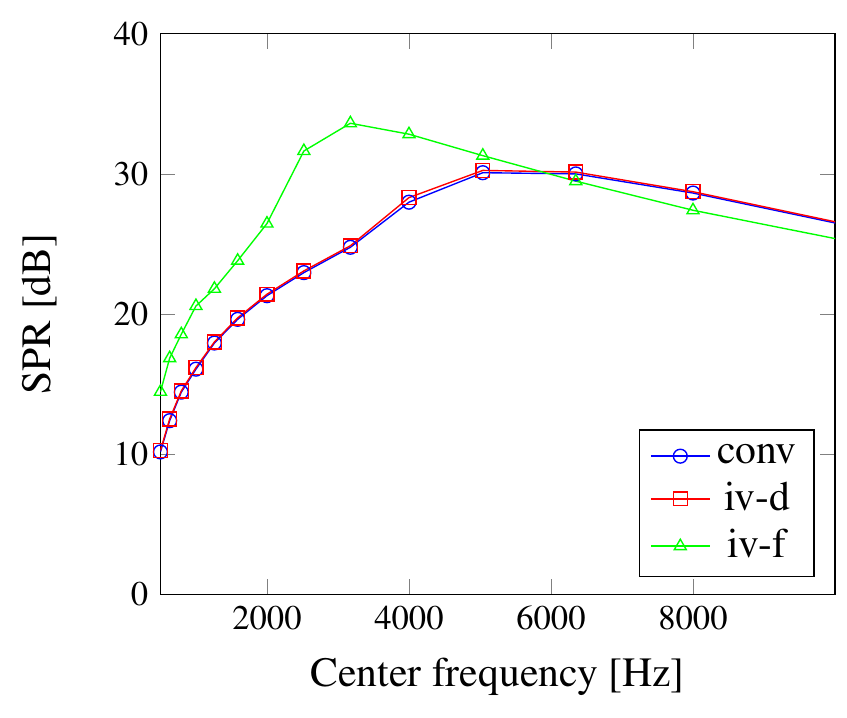}
  \caption{$\Delta_{\mathrm{noise}} = 10$dB}
\end{subfigure}
\begin{subfigure}{.32\textwidth}
  \centering
  \includegraphics[width=\textwidth]{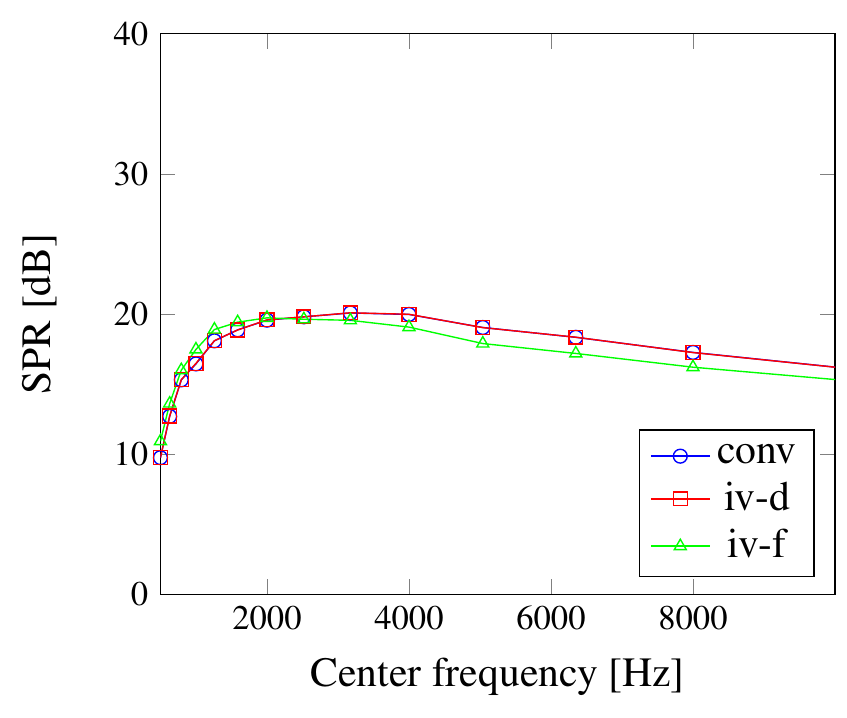}
  \caption{$\Delta_{\mathrm{noise}} = 0$dB}
\end{subfigure}%
\caption{SPR measure for beamforming (a)-(c) and regularized DAMAS-NNLS with discrepancy principle (d)-(f).}
\label{fig:spr-maps}
\end{figure}
\subsection{Effects on Resolution and Noise}
The measures on resolution \eqref{eq:res_measure}, SNR \eqref{eq:snr_measure} and SPR \eqref{eq:spr_measure} are applied to third-octave band sound pressure level (SPL) source maps, where the diagonal elements of the CSM are not used (diagonal removal). The center frequencies range from $500$Hz to $10$kHz. For almost all cases, the iv-d and the standard weighting results are very similar. This can be explained by the simple structure of the data set (cf. \eqref{eq:synth-samples}). By construction, the variances of the CSM components do not vary strongly and hence the iv-d weighting is very similar to the uniform weighting. The iv-f weighting on the other hand shows significant effects. 
\paragraph{Resolution}
The results regarding the resolution measure are shown in Fig.\ref{fig:res-maps}. The resolution of beamforming maps can be improved by the iv-f weighting. The improvements become smaller with increasing noise or increasing frequency. Similar effects can be observed for the regularized DAMAS-NNLS results. 
\paragraph{SNR}
The results regarding the SNR measure are shown in Fig.\ref{fig:snr-maps}. The iv-f weighting is able to improve the SNR on beamforming maps compared to the standard (or iv-d) weighting for the cases $\Delta_{\mathrm{noise}} \in \lbrace 20 \mathrm{dB}, 10 \mathrm{dB} \rbrace$. For $\Delta_{\mathrm{noise}} = 0$dB the SNR measure is very similar for all three weightings. For the DAMAS-NNLS results on the other hand, the iv-f weighting does not yield higher SNR measures. In the case of $\Delta_{\mathrm{noise}} = 20$dB the uniform weighting (as well as iv-d) has a strong peak of the SNR measure at low frequencies whereas the iv-f weighting has not. Those large SNR values are caused by the coarse resolution in that frequency range (cf. Fig\ref{fig:res-maps}). Hence, the resulting mainlobe dominates the sourcemap and the maximum sidelobe is pushed away from the map centre.
\paragraph{SPR}
The results regarding the SPR measure are shown in Fig.\ref{fig:spr-maps}. The iv-f beamforming maps have a higher SPR value for the cases $\Delta_{\mathrm{noise}} \in \lbrace 20 \mathrm{dB}, 10 \mathrm{dB} \rbrace$, whereas for the case with the strongest noise, all three weightings yield similar values. The effects of iv-f on the DAMAS-NNLS maps are quite similiar. For $\Delta_{\mathrm{noise}} \in \lbrace 20 \mathrm{dB}, 10 \mathrm{dB} \rbrace$ iv-f shows higher SPR values especially at lower frequencies. For all noise levels we observe that at higher frequencies the SPR of the iv-f weighted maps is slightly below the SPR of the other two weightings.

\section{Results on Experimental Data}\label{sec:results-experimental}
Now we investigate how the previously discussed weighting choices affect the beamforming and DAMAS-NNLS source maps for real measurement data. We will work on a benchmark dataset of measurements at the cryogenic wind tunnel in Cologne (DNW-KKK) \cite{Ahlefeldt2013}, \cite{Bahr2017}. The test scenario considers a scaled DO-728 half-model at Mach numbers 0.15, 0.2, 0.25 and angles of attack 3$^\circ$, 5$^\circ$, 9$^\circ$. We consider measurements conducted at a static pressure of $1007 \ \mathrm{Pa}$, a static temperature of $286 \  \mathrm{K}$ and a Reynolds number of approximately $1.26 \cdot 10^6$. The benchmark dataset is available at the website of the technical university of Cottbus and is labeled 'DLR1' \cite{BenchmarkDLR1}.  \ \\ \ \\
We will illustrate the impact on the imaging result for the three weighting choices: conventional \eqref{eq:wc_conventional}, inverse (co)variance - diagonal \eqref{eq:wc_iv-d}  and inverse (co)variance - full \eqref{eq:wc_iv-f}. This will be done for the discussed methods beamforming \eqref{eq:wbf_def} and regularized DAMAS-NNLS \eqref{eq:damas_nnlsw_reg}. We will also discuss the choice of the covariance estimator and some statistical properties of the dataset. It should be emphasized that the results shown are primarily intended to illustrate the concept of weighted beamforming and DAMAS-NNLS on experimental data. As we consider only a specific dataset and a few selected source maps, the results should not be understood as a detailed comparative study of the different methods. 
\subsection{Covariance estimator}
For the iv-d weighting, sample variances are used since that approach does not rely on any additional assumptions on the data. For the iv-f weighting we employ the estimator $\mathbf{\Sigma}^{\text{est}}$ given by Eq. \eqref{eq:covcov_formula}. As mentioned beforehand, this estimation procedure assumes zero mean Gaussian signals and does neither guarantee regularity nor positive semi-definiteness. However, for the considered dataset the matrix $\mathbf{\Sigma}^{\text{est}}$ appears to be positive definite and regular.  Within the frequency range from $3000$ Hz to $9000$ Hz, the order of magnitude of the minimal eigenvalue $\lambda_{\mathrm{min}}$ ranges from $\mathcal{O}\left( 10^{-7} \right)$ to $\mathcal{O}\left( 10^{-3} \right)$. The condition number of the covariance estimator with respect to the Euclidean norm is given by the ratio between the largest and smallest eigenvalue i.e. $\text{cond}_2 \left( \mathbf{\Sigma}^{\text{est}}  \right) =  \frac{\lambda_{\mathrm{max}}}{\lambda_{\mathrm{min}}}$. The order of magnitude of the condition number for frequencies between $3000$ Hz and $9000$ Hz ranges from $\mathcal{O}\left( 10^{2} \right)$ to $\mathcal{O}\left( 10^{3} \right)$. The assumption of zero mean Gaussian signals is supported by the results of the next subsection. 
\subsection{Statistical properties}
We have seen that the choice of the weighting matrix $\mathbf{W}$ is often closely connected to statistical assumptions about the data (see e.g. Remark \ref{rem:whitening} and Assumption \ref{as:proper}). In the following we will examine some statistical quantities that may serve as indicators for the plausibility of those assumptions. The quantities are evaluated for the considered dataset within the frequency range from $3$kHz to $9$kHz. 

\paragraph{Zero mean pressure signals} \ \\
Assumption \ref{as:proper}.\ref{it:scnd_p} states that the pressure signal vector $\mathbf{p}$ has zero mean. As a measure of the deviation from this assumption we define
\begin{linenomath}\begin{equation*}
\varepsilon_{\mathrm{mean}} = \sqrt{\frac{\sum_{m=1}^M \abs{\mathbb{M}\left(p_j(\mathbf{x}_m) \right)}^2}{\sum_{m=1}^M \abs{\mathbb{M} \left(\abs{p_j(\mathbf{x}_m)} \right) }^2}} \ .
\end{equation*}\end{linenomath} 
This measure relates the norm of the mean pressure vector to the norm of the mean vector of absolute pressure values. Fig. \ref{fig:zeroMean} shows that $\varepsilon_{\mathrm{mean}}$ is below $0.04$ for all frequencies. Evaluating $\varepsilon_{\mathrm{mean}}$ on the synthetic dataset from the previous section yields maximum values in the range of $0.07$ ($\Delta_{\mathrm{noise}}=20$dB) respectively  $0.04$ ($\Delta_{\mathrm{noise}} \in \lbrace 10 \mathrm{dB}, 0 \mathrm{dB} \rbrace$).

\paragraph{Gaussian pressure signals}
Assumption \ref{as:proper}.\ref{it:fst_p} states that the pressure signal vector $\mathbf{p}$ is a vector-valued complex Gaussian random variable. This assumption was tested using the Anderson-Darling test for normality \cite{Anderson1952, Anderson1954} with a significance level of $5\%$. Since the Anderson-Darling test is designed for real valued random samples, real and imaginary part of the pressure samples are tested individually. For each frequency the acceptance rate $R_{\mathrm{accept}}$ is defined as the ratio of microphones for which the null hypothesis (sample is normally distributed) is accepted. Fig. \ref{fig:adTest} shows that the acceptance rate is greater than $0.8$ for almost all frequencies.  

\paragraph{Proper pressure signals}
Assumption \ref{as:proper}.\ref{it:thrd_p} states that the pseudo correlation matrix 
$\mathbb{E}\left( \mathbf{p} \mathbf{p}^{\top} \right)$ vanishes. For the estimated quantities that implies that the measured PCSM $\mathbf{C}^{\mathrm{ps}}$ \eqref{eq:pcsm_def} must be negligible compared to the measured CSM $\cobs$ \eqref{eq:csm_def}. Therefore we consider the ratio of their norms
\begin{linenomath}\begin{equation*}
R_{\mathrm{proper}} = \frac{\norm{\mathbf{C}^{\mathrm{ps}}}_F}{\norm{\cobs}_F} \ .
\end{equation*}\end{linenomath}
Fig. \ref{fig:properRatio} shows that the norm ratio for almost all frequencies lies between $0.2$ and $0.3$. 

\paragraph{White noise}
The weighting choice $\mathbf{W} \sim \mathbf{I}$ that is used for standard imaging methods implicitly imposes the assumption of white noise (see Remark \ref{rem:whitening}). To measure the deviation from this assumption we consider
\begin{linenomath}\begin{equation*}
\varepsilon_{\mathrm{white}} = \frac{ \min \limits_{a \in \mathbb{R}^{+}}\norm{\mathbf{\Sigma}^{\mathrm{est}} - a \mathbf{I}}_F}{\norm{\mathbf{\Sigma}^{\mathrm{est}}}_F} \ ,
\end{equation*}\end{linenomath}
where the covariance estimator $\mathbf{\Sigma}^{\mathrm{est}}$ is obtained by Formula \eqref{eq:covcov_formula}. Fig. \ref{fig:whiteNoiseGauss} shows that $\varepsilon_{\mathrm{white}}$ is greater than $0.6$ for all frequencies.

\paragraph{Assessment of statistical evaluations:}
Fig. \ref{fig:zeroMean} shows only very small deviations from the zero mean assumption for the pressure signals (Assumption \ref{as:proper}.\ref{it:scnd_p}). Since a large majority of the microphone signals passes the Anderson-Darling test \ref{fig:adTest}, the assumption of Gaussian pressure signals (Assumption \ref{as:proper}.\ref{it:fst_p}) is also supported. According to Fig. \ref{fig:properRatio}, the observed PCSM is not negligible compared to the CSM. Those results indicate that the assumption of proper pressure signals (Assumption \ref{as:proper}.\ref{it:thrd_p}) seems to be violated by the dataset. Similarly, the results in Fig. \ref{fig:whiteNoiseGauss} indicate that the assumption of white noise is violated.

\begin{figure}[h]
\begin{subfigure}{.49\textwidth}
\centering
\includegraphics[width = \textwidth]{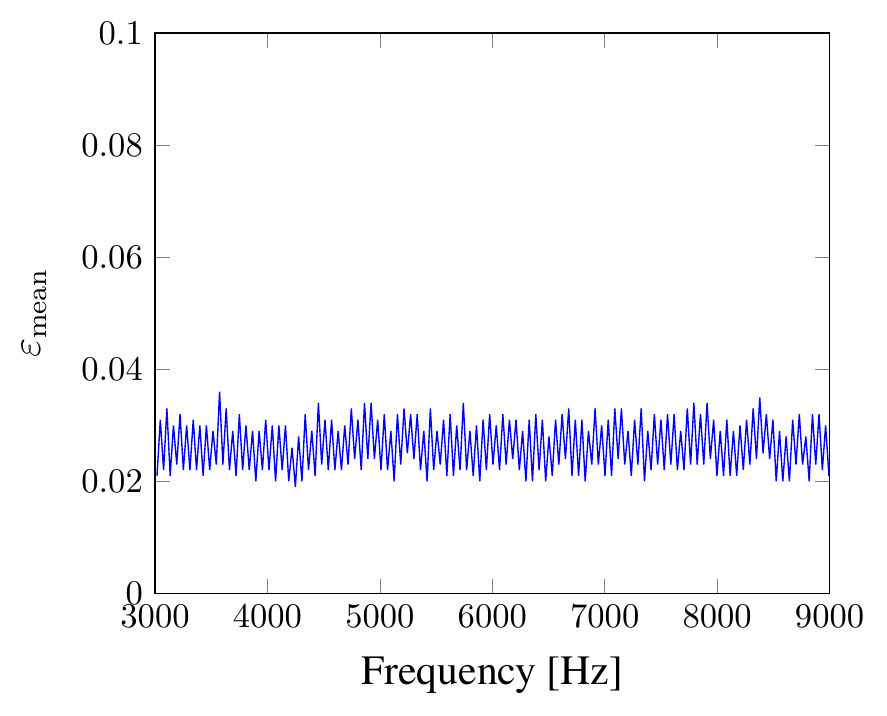}
\caption{Deviation from zero mean assumption. \label{fig:zeroMean}}
\end{subfigure}
\begin{subfigure}{.49\textwidth}
\centering
\includegraphics[width = \textwidth]{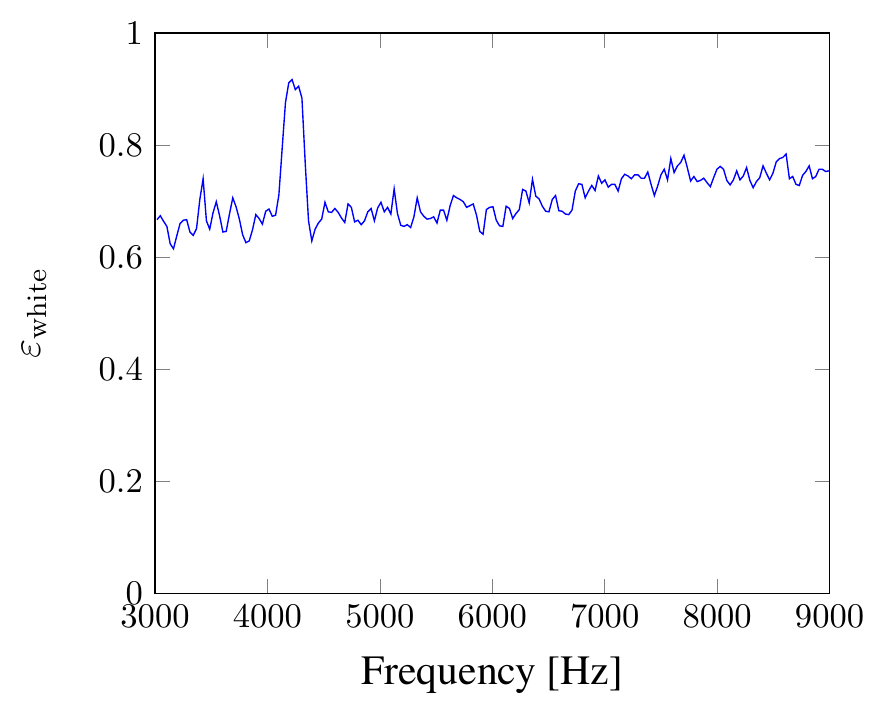}
\caption{Deviation from white noise assumption. \label{fig:whiteNoiseGauss}}
\end{subfigure}
\\
\vspace*{0.1cm}
\\
\begin{subfigure}{.49\textwidth}
\centering
\includegraphics[width = \textwidth]{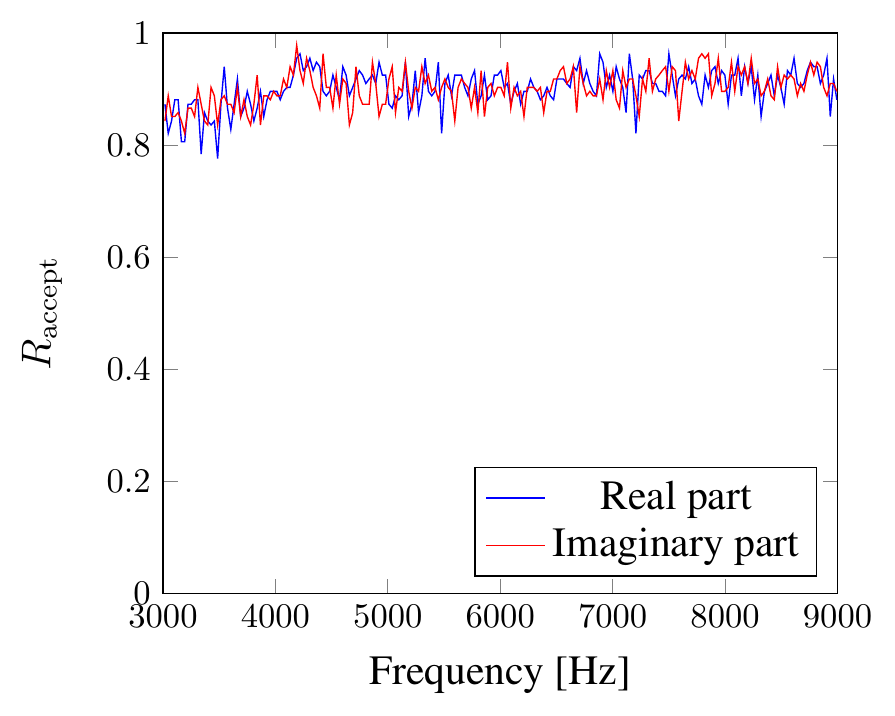}
\caption{Acceptance rate of Anderson-Darling test. \label{fig:adTest}}
\end{subfigure}
\begin{subfigure}{.49\textwidth}
\centering
\includegraphics[width = \textwidth]{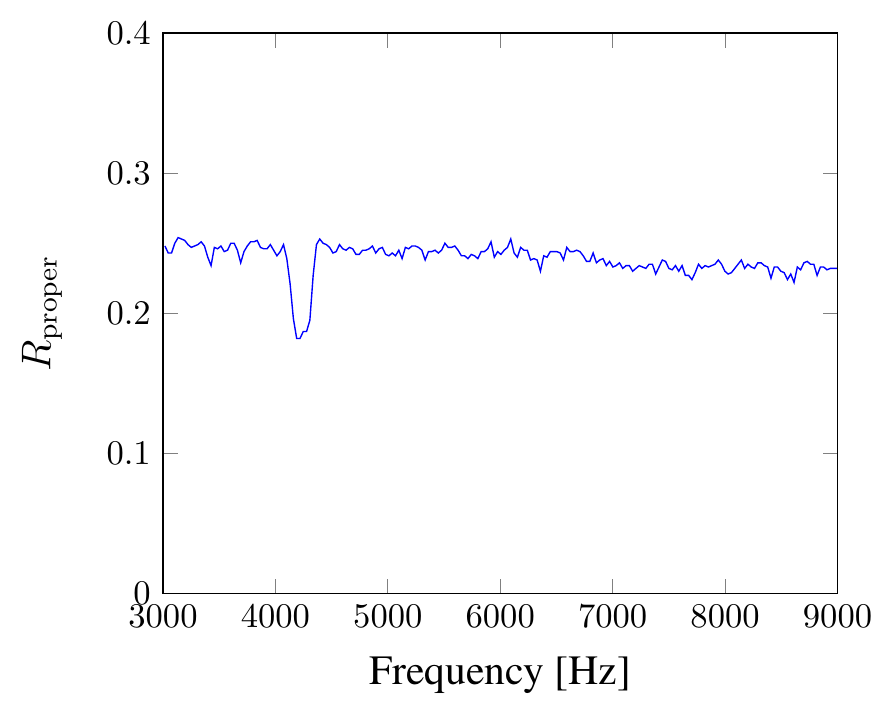}
\caption{Norm ratio of PCSM and CSM. \label{fig:properRatio}}
\end{subfigure}
\caption{Testing of statistical assumptions}
\label{fig:statistics}
\end{figure}

\subsection{Beamforming} 
Fig. \ref{fig:bf-maps} shows third-octave band beamforming source maps at three center frequencies (4000 Hz, 6000 Hz, 8000 Hz)  for iv-f, iv-d and Conventional Beamforming. The results are computed on a focus grid with $53 \times 73$ focus points and the autocorrelations were removed (diagonal removal) from the dataset. All source maps are adjusted to the local source maximum, respectively. The sources and source levels of all three used noise models appear similar for each frequency band, respectively. The maximum value of Conventional and iv-f beamforming is almost identical whereas the maximium of the iv-d beamformer deviates in the range of $0.5$ dB. The main sources are at the inner and outer slat region and the flap side edge. The source peaks are sharper at higher frequencies regardless of the Beamformer noise model. 
Both non-standard weightings show an increased dynamic range and resolution compared to the standard conventional weighting. The local peaks are sharper and the values in regions apart from the half-model, where no sources are expected, are lower. The results for iv-d and iv-f beamforming look very similar in many focus regions. At 8 kHz, sources at the inboard flap
leading edge are damped by the iv-d and the conventional beamformer such that they may not be identifiable anymore. Those minor sources are only identified by the iv-f beamformer.

\subsection{DAMAS-NNLS}

Fig. \ref{fig:damas-maps} shows third-octave band source maps at three center frequencies (4000 Hz, 6000 Hz, 8000 Hz) for the regularized DAMAS-NNLS solution with iv-f, iv-d and conventional weighting.
For each third-octave band and weighting type, the regularization parameter $\alpha$ was chosen according to the discrepancy principle \eqref{eq:discrepancy} at the center frequency with $\tau = 1.5$ and $\delta = \delta_{\mathbf{W}}^{\mathrm{rms}}$. Again the source maps are adjusted to their local source maximum.\ \\ \ \\
Both non-standard weighting approaches show an improved damping of noise effects at 4 kHz. The iv-f source map appears much cleaner than the source map for conventional weighting. At 8 kHz, the iv-f result shows a bit more noise effects apart from the wing than the other two weighting choices. The individual maxima of each map differ from each other in a range of maximum $2$ dB. For all frequencies the iv-f result has the highest resolution (i.e. the local peaks are the sharpest).  For each third-octave bands, the regularization parameter increases from top to bottom i.e. iv-d requires less regularization than conventional weighting and iv-f requires less regularization than iv-d. This observation is in accordance with the theory from Section \ref{sec:varopt} which states that iv-f has the lowest root mean squared noise level among all possible weightings. 

\begin{figure}[h]
\begin{subfigure}{.32\textwidth}
  \centering
  \includegraphics[width=\textwidth]{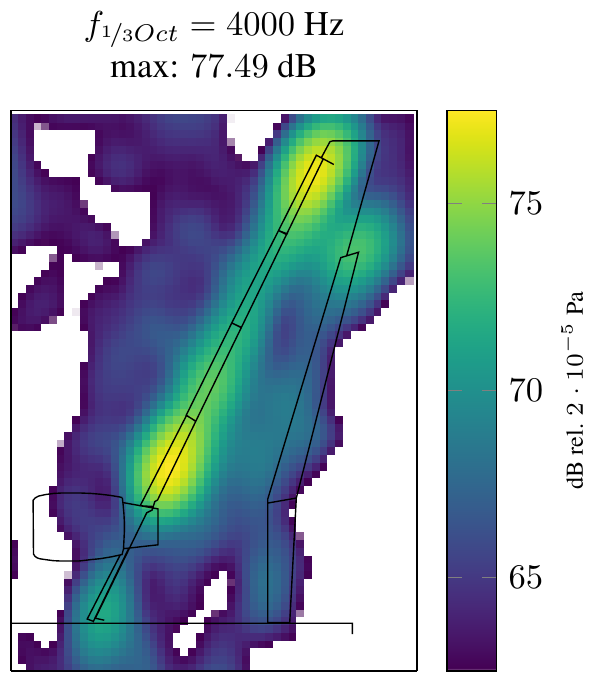}
  \caption{}
\end{subfigure}
\begin{subfigure}{.32\textwidth}
  \centering
  \includegraphics[width=\textwidth]{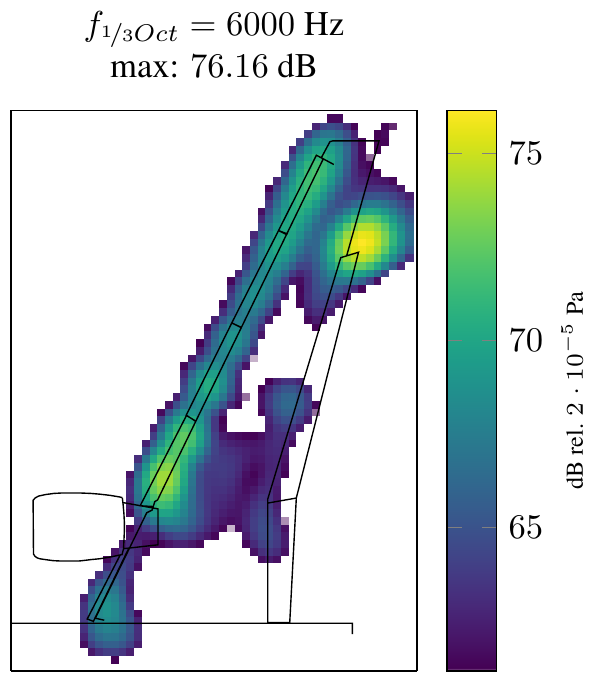}
  \caption{}
\end{subfigure}
\begin{subfigure}{.32\textwidth}
  \centering
  \includegraphics[width=\textwidth]{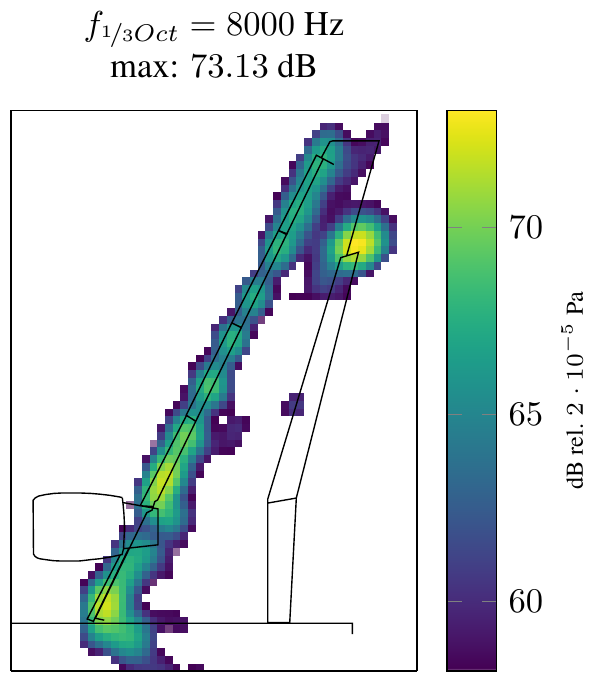}
  \caption{}
\end{subfigure}%
\\
\vspace*{0.1cm}
\\
\begin{subfigure}{.32\textwidth}
  \centering
  \includegraphics[width=\textwidth]{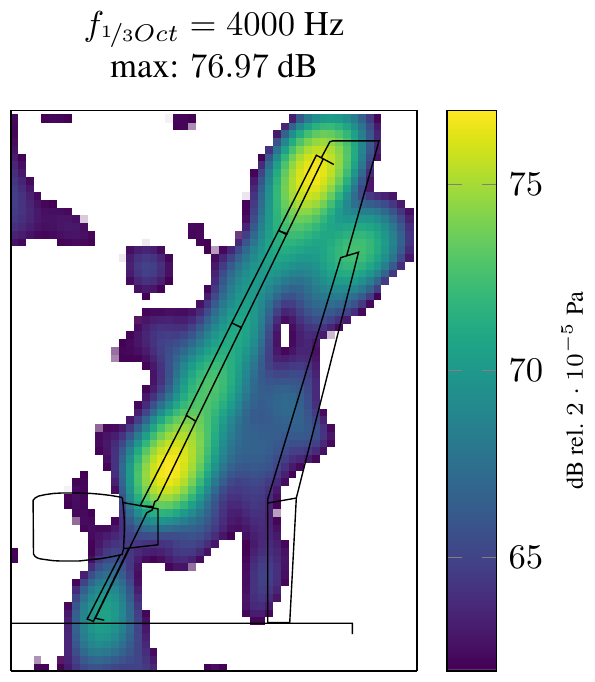}
  \caption{}
\end{subfigure}
\begin{subfigure}{.32\textwidth}
  \centering
  \includegraphics[width=\textwidth]{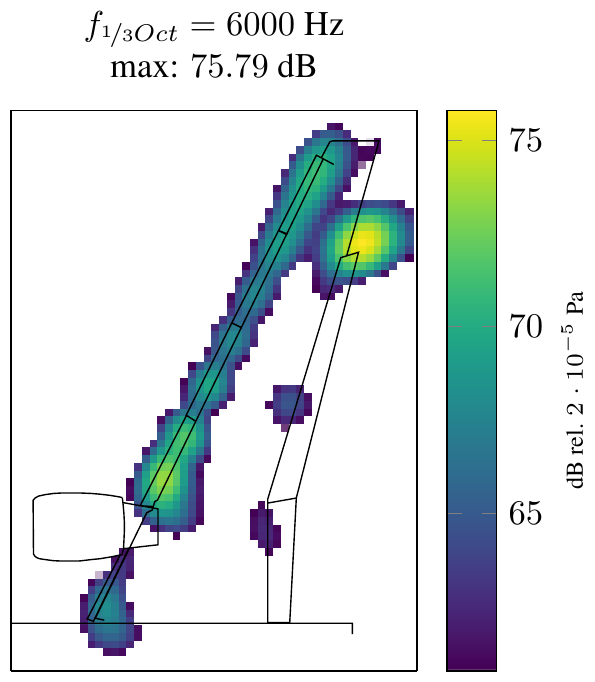}
  \caption{}
\end{subfigure}
\begin{subfigure}{.32\textwidth}
  \centering
  \includegraphics[width=\textwidth]{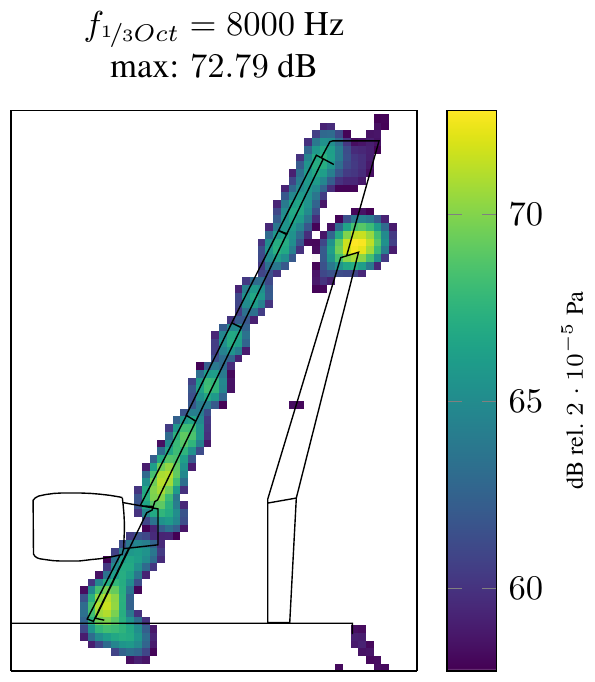}
  \caption{}
\end{subfigure}%
\\
\vspace*{0.1cm}
\\
\begin{subfigure}{.32\textwidth}
  \centering
  \includegraphics[width=\textwidth]{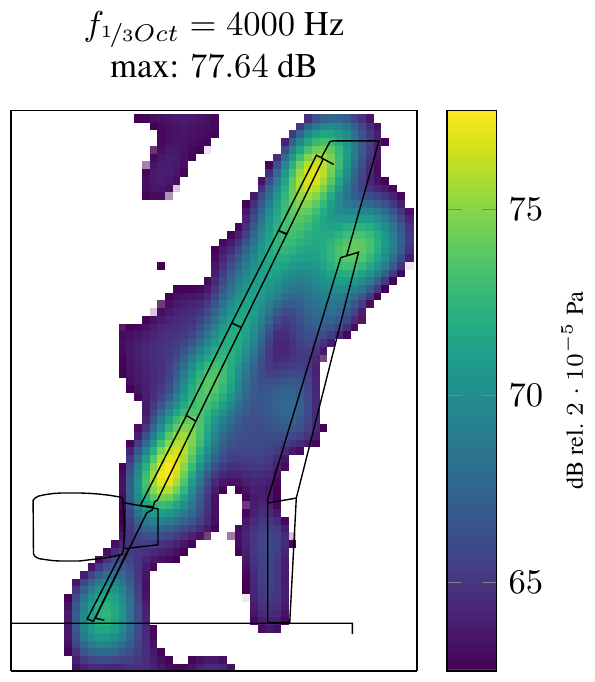}
  \caption{}
\end{subfigure}
\begin{subfigure}{.32\textwidth}
  \centering
  \includegraphics[width=\textwidth]{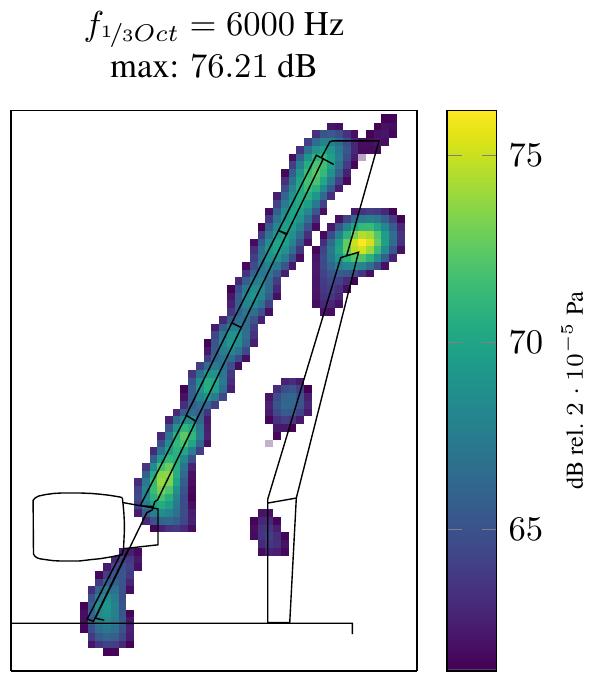}
  \caption{}
\end{subfigure}
\begin{subfigure}{.32\textwidth}
  \centering
  \includegraphics[width=\textwidth]{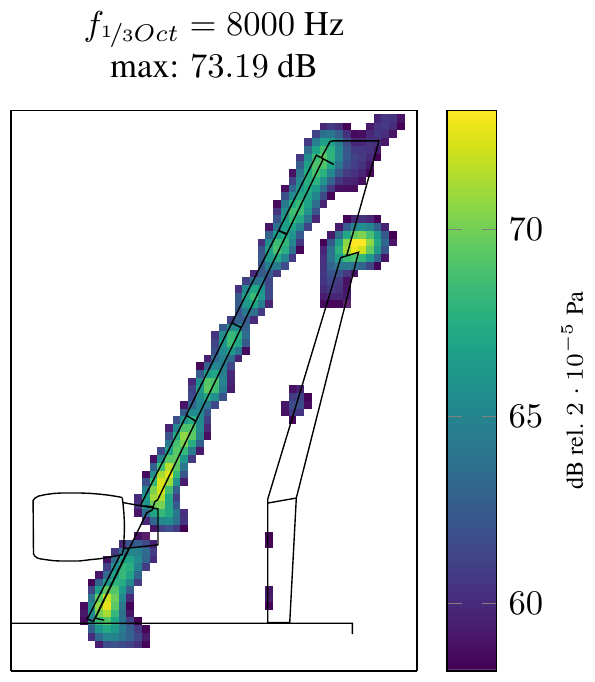}
  \caption{}
\end{subfigure}%
\caption{Third-octave band beamforming source maps for Conventional Beamforming (a)-(c) iv-d beamforming (d)-(f) and iv-f beamforming (g)-(i). Dynamic range: $15$ dB, Mach number: 0.15, angle of attack: 3$^\circ$.}
\label{fig:bf-maps}
\end{figure}

\begin{figure}[h]
\begin{subfigure}{.32\textwidth}
  \centering
  \includegraphics[width=\textwidth]{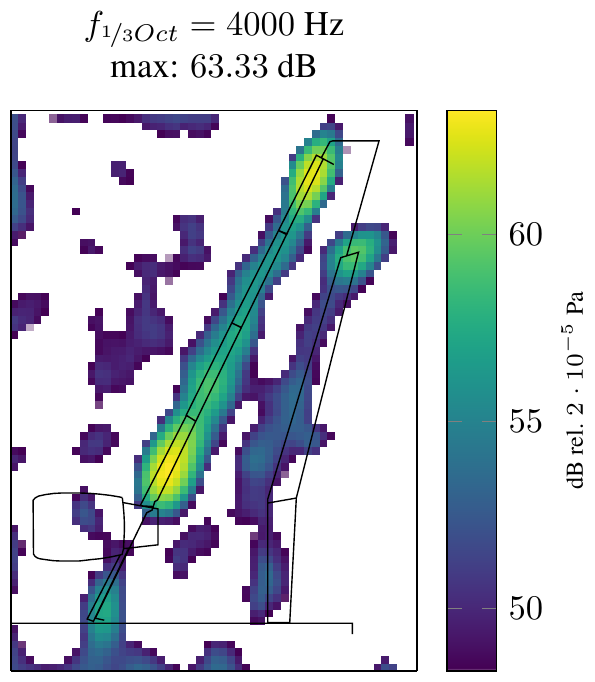}
  \caption{$\alpha \approx 63.57$}
\end{subfigure}
\begin{subfigure}{.32\textwidth}
  \centering
  \includegraphics[width=\textwidth]{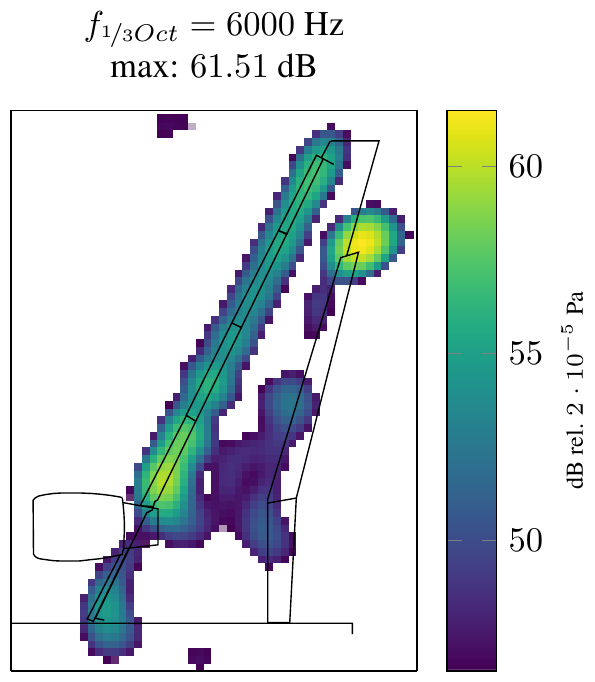}
   \caption{$\alpha \approx 77.37$}
\end{subfigure}
\begin{subfigure}{.32\textwidth}
  \centering
  \includegraphics[width=\textwidth]{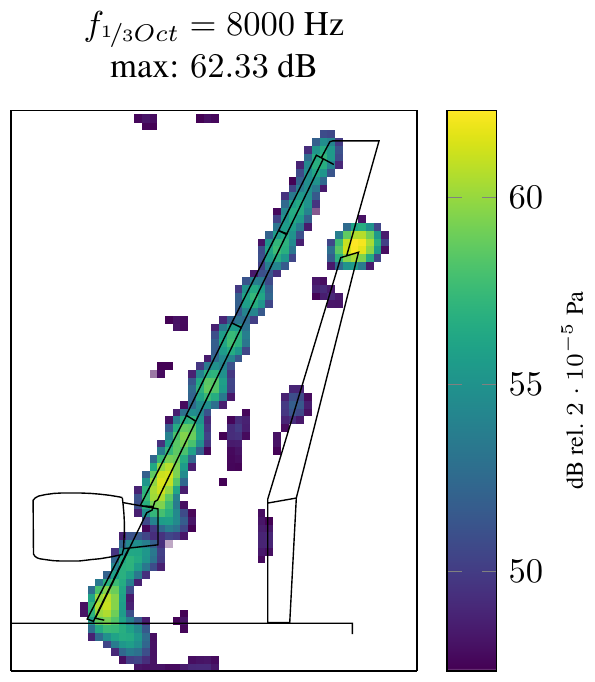}
  \caption{$\alpha \approx 26.41$}
\end{subfigure}%
\\
\vspace*{0.1cm}
\\
\begin{subfigure}{.32\textwidth}
  \centering
  \includegraphics[width=\textwidth]{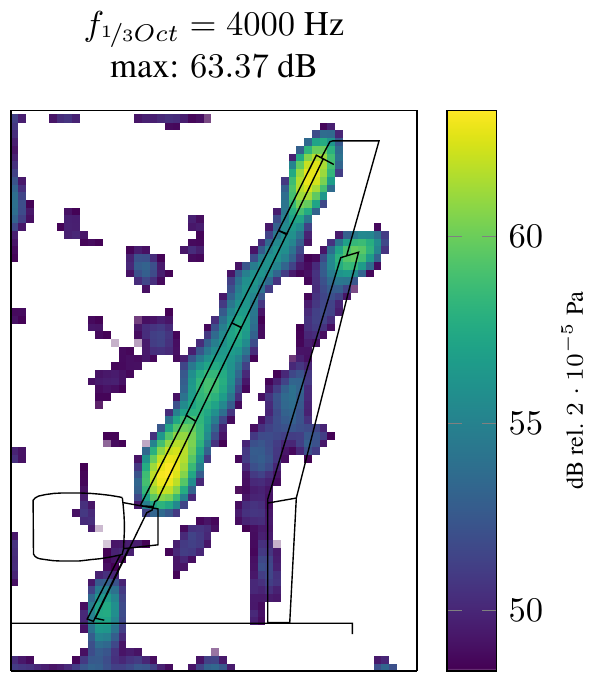}
  \caption{$\alpha \approx 51.17$}
\end{subfigure}
\begin{subfigure}{.32\textwidth}
  \centering
  \includegraphics[width=\textwidth]{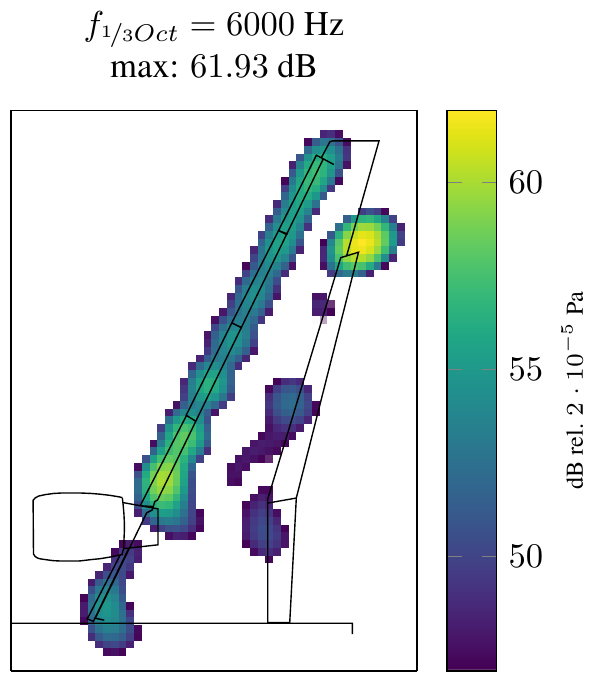}
  \caption{$\alpha \approx 63.08$}
\end{subfigure}
\begin{subfigure}{.32\textwidth}
  \centering
  \includegraphics[width=\textwidth]{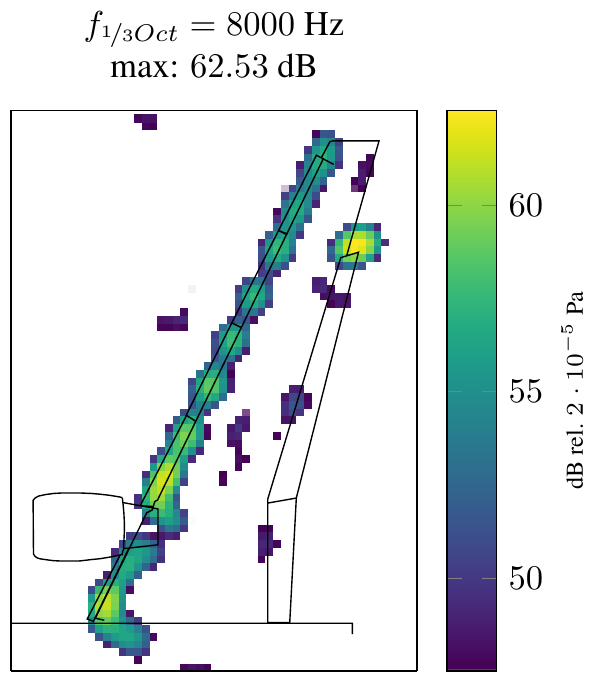}
  \caption{$\alpha \approx 14.85$}
\end{subfigure}%
\\
\vspace*{0.1cm}
\\
\begin{subfigure}{.32\textwidth}
  \centering
  \includegraphics[width=\textwidth]{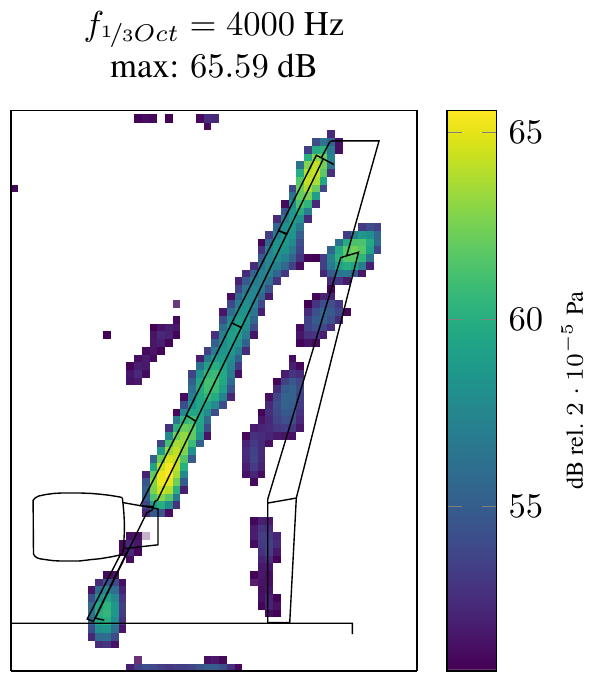}
  \caption{$\alpha \approx 10.54$}
\end{subfigure}
\begin{subfigure}{.32\textwidth}
  \centering
  \includegraphics[width=\textwidth]{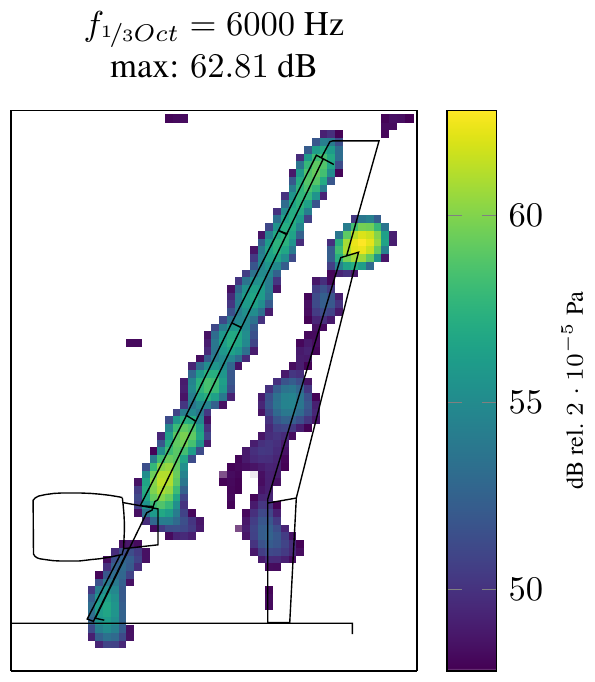}
  \caption{$\alpha \approx 21.60$}
\end{subfigure}
\begin{subfigure}{.32\textwidth}
  \centering
  \includegraphics[width=\textwidth]{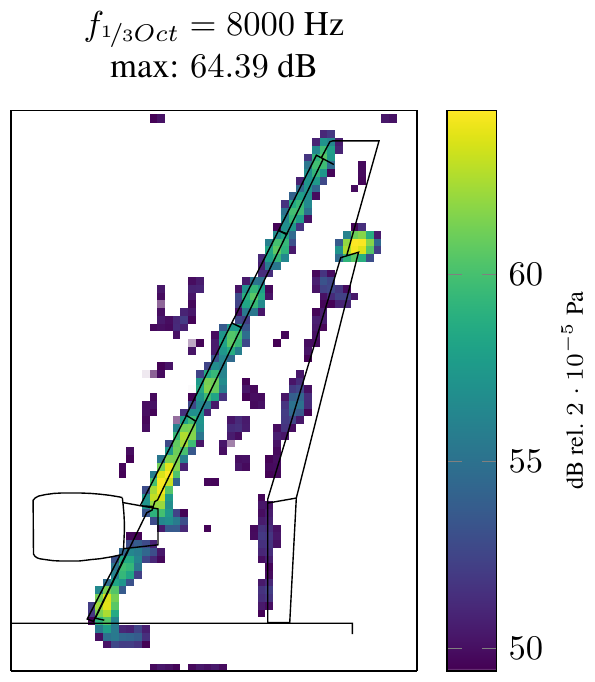}
  \caption{$\alpha \approx 3.78$}
\end{subfigure}%
\caption{Third-octave band, regularized DAMAS-NNLS source maps for conventional weighting (a)-(c) iv-d weighting (d)-(f) and iv-f weighting (g)-(i). Dynamic range: $15$ dB, Mach number: 0.15, angle of attack: 3$^\circ$.}
\label{fig:damas-maps}
\end{figure}

\FloatBarrier


\section{Conclusion}\label{sec:conclusion}
We examined a model for the measurement process that provides a representation of the observed cross correlations as the sum of acoustic correlations, hydrodynamic correlations and zero mean noise. The noise term can possess an arbitrary covariance structure in general. Under a Gaussian assumption on the pressure vector, all noise covariances can be estimated efficiently at least if the estimated covariance matrix is regular and positive definite. The generic noise model motivated a modified distance measure, parameterized by a weighting matrix which may be chosen as the full noise covariance matrix (iv-f) or its diagonal part (iv-d). Each weighted distance measure defines a beamforming and DAMAS-NNLS method by replacing the standard Euclidean distance by the weighted distance in the characterizing minimization problem. This led to a whole class of source localization techniques containing many well-known methods (CBF, RAB, Capon's method, shading). In a theoretical analysis we showed that among all weighting choices, the iv-f weighted beamformer has the lowest variance. Furthermore we demonstrated that Capon's method (a.k.a. minimum variance method) is not equivalent to the iv-f beamformer. If the pressure signals are Gaussian with zero mean and proper, Capon's method yields a good approximation of the iv-f beamformer. However, the statistical findings indicate that the assumption of proper pressure signals does not seem to be valid for the considered experimental dataset. The application of the iv-f weighting on sythetic data yields improved resolution for beamforming and regularized DAMAS-NNLS. For beamforming maps, the iv-f weighting has also a positive effect on the SNR. On the experimental dataset the weightings showed improvements of the source map quality for beamforming as well as for regularized DAMAS-NNLS, especially at lower frequencies. Both data dependent weighting choices, iv-d and iv-f were able to increase the resolution and reduce noise effects in the source maps. Since iv-d has the same computational order of complexity as Conventional beamforming it may be very attractive in terms of efficiency.  
\ \\ \ \\If we consider the Mahalanobis distance (iv-f weighting) as the natural distance measure in a measurement environment with additive random noise, the standard methods implicitly impose that the noise is white. Whenever this white noise assumption is violated, the quality of source maps obtained by standard methods can suffer. We clearly showed in the theoretical analysis and the results section that source localization methods can benefit from fourth order moments of the pressure data. Even if the covariance structure of the noise is not directly incorporated into the method itself it can be used to test how strongly the white noise assumption is violated by the dataset. This can serve as an indicator of uncertainty of standard weighted source maps.  
\FloatBarrier
\bibliography{references}

\end{document}